\documentclass[lettersize,journal]{IEEEtran}
\usepackage{amsmath,amsfonts,amssymb}
\usepackage{algorithmic}
\usepackage{algorithm}
\usepackage{tabularx}
\usepackage{array}
\usepackage{breqn}
\usepackage[caption=false,font=normalsize,labelfont=sf,textfont=sf]{subfig}
\usepackage{textcomp}
\usepackage{stfloats}
\usepackage{url}
\usepackage{verbatim}
\usepackage{graphicx}
\usepackage{cite}
\usepackage{amsthm}
\usepackage{booktabs}
\usepackage{arydshln}
\usepackage{pifont}
\usepackage{balance}
\usepackage{multirow}
\newcommand{\cmark}{\ding{51}}
\newtheorem{thm}{Theorem}

\begin{document}

\title{Stacked Intelligent Metasurfaces for Near-Field Multi-User Covert Communications}

\author{Ahmed M. Benaya, \IEEEmembership{Member, IEEE}, Ali A. Nasir, \IEEEmembership{Senior Member, IEEE}, Khaled M. Rabie, \IEEEmembership{Senior Member, IEEE}, and Daniel B. da Costa, \IEEEmembership{Senior Member, IEEE}
\thanks{This work is supported by the Deanship of Research at King Fahd University of Petroleum and Minerals (KFUPM) for funding under the Interdisciplinary Research Center for Communication Systems and Sensing (IRC-CSS).}
\thanks{Ahmed M. Benaya is with the Interdisciplinary Research Center for Communication Systems and Sensing (IRC-CSS), King Fahd University of Petroleum and Minerals (KFUPM), Dhahran 31261, Saudi Arabia. (e-mail: ahmed.diab@kfupm.edu.sa)}
\thanks{Ali A. Nasir and Daniel B. da Costa are with the Department of Electrical Engineering, King Fahd University of Petroleum and Minerals (KFUPM),  Dhahran 31261, Saudi Arabia, while Khaled M. Rabie is with the Department of Computer Engineering, King Fahd University of Petroleum and Minerals (KFUPM),  Dhahran 31261, Saudi Arabia. They all are also affiliated with the Center for Communication Systems and Sensing at KFUPM. (e-mail: anasir@kfupm.edu.sa, k.rabie@kfupm.edu.sa, danielbcosta@ieee.org).}}



\maketitle

\begin{abstract}
Reconfigurable intelligent surfaces have emerged as a cutting-edge technology for next-generation wireless communications that are capable of reconfiguring the wireless environment using a large number of cost-effective reflecting elements. However, a significant body of prior studies has focused on single-layer surfaces that lack the capability of significantly mitigating inter-user interference. Moreover, previous studies mostly consider far-field operation and neglect working in the near-field region. In this paper, we propose a stacked intelligent metasurfaces (SIM)-assisted near-field multi-user multiple-input-single-output covert communication system. More specifically, we have a multi-antenna base station that is assisted with a SIM to serve multiple single-antenna users in the presence of multiple single-antenna wardens. We aim at optimizing the beamfocusing vectors at the BS and SIM phase shift matrices to maximize the sum covert rate under maximum transmit power budget constraint, quality-of-service (QoS) constraint for all users, and covertness constraint. Since the formulated problem is highly non-convex due to the coupling between the variables, we adopt alternating optimization to tackle it, where we divide the problem into beamfocusing sub-problem and SIM phase shift sub-problem, which are solved alternately until convergence. We leverage successive convex approximation (SCA) to solve the two sub-problems. Additionally, we formulate the SIM phase shift sub-problem using the widely adopted projected gradient ascent (PGA) method for comparison purposes. The conducted simulations reveal that the SCA-based algorithm outperforms the existing PGA-based algorithm as well as other benchmarks in terms of the achieved sum covert rate, demonstrating its consistent performance and robustness under various system parameter configurations.
\end{abstract}

\begin{IEEEkeywords}
Stacked intelligent metasurfaces, near-field communications, covert communications, alternating optimization (AO), successive convex approximation (SCA), projected gradient ascent (PGA).
\end{IEEEkeywords}

\section{Introduction}
\IEEEPARstart{T}{he} global vision for sixth-generation (6G) wireless systems is defined by a demand for unprecedented data rates, ultra-low latency, and ubiquitous connectivity~\cite{Wang2023On}. However, the performance of 6G wireless systems is fundamentally constrained by the wireless propagation environment, which is often uncontrollable and suffers from blockages and multi-path fading~\cite{liaskos2018new}. Reconfigurable intelligent surface (RIS) technology has emerged as a paradigm shifting solution for next-generation wireless networks due to its ability to control the wireless environment and overcome the limitations of conventional wireless communication systems~\cite{liu2021reconfigurable,mei2022intelligent,cao2023physical,hua2023secure,benaya2024double}. RISs are intelligent surfaces consisting of a large number of reconfigurable passive/active elements that can dynamically adjust their reflection properties to steer wireless signals toward target receivers, thereby mitigating blockages, overcoming non-line-of-sight (NLOS) conditions, and enhancing link quality~\cite{wu2021intelligent,salem2022intelligent}.

A large portion of the existing research efforts on RIS-aided wireless systems has focused on single-layer metasurface architectures. Although such designs are simple in implementation, they inherently restrict the available degrees-of-freedom for reconfiguring the wireless environment, thereby limiting their capability to effectively suppress inter-user interference in multi-user communication scenarios~\cite{wei2022multi,guo2020weighted}. To overcome these inherent limitations, stacked intelligent metasurfaces (SIMs) have recently emerged as the latest development in intelligent surface technology, providing energy-efficient signal processing in the wave domain~\cite{an2025stacked,li2024stacked}. Unlike conventional single-layer RIS, SIMs are composed of multiple cascaded metasurface layers, significantly improving the degrees-of-freedom available for wave manipulation~\cite{liu2025stacked}. This multi-layer architecture enables more flexible and fine-grained control of the electromagnetic propagation environment, thereby improving interference management and overall system performance in complex multi-user wireless scenarios.

In addition to advanced intelligent surface architectures, operating in high-frequency bands above 6 GHz offers significantly higher data rates for future wireless systems. The integration of high-frequency operation with large intelligent surfaces fundamentally reshapes the propagation characteristics of 6G networks by substantially expanding the near-field region. This can be attributed to the increased Rayleigh distance, which scales with both the carrier frequency and the physical aperture size of the surface~\cite{cui2022near}. Within the near-field regime, electromagnetic wave propagation is modeled by spherical waves rather than conventional planar waves for far-field scenarios, providing enhanced beamfocusing capabilities~\cite{liu2024near}. Although the adoption of large-scale SIMs can effectively improve coverage and spectral/energy efficiencies at high frequencies by creating additional controllable propagation paths, they also unintentionally increase the vulnerability of transmitted signals to be intercepted by unintended receivers~\cite{benaya2023physical}. Consequently, the development of SIM-assisted near-field covert transmission strategies is crucial to minimize signal detectability while maintaining reliable communication.

\subsection{Related Work}
Recently, significant research efforts have been devoted to investigating SIM-assisted wireless communication systems~\cite{an2025stacked,liu2025multi,bahingayi2025refined,niu2024stacked,ebrahimi2025stacked}. In~\cite{an2025stacked}, a SIM-assisted downlink multi-user multiple-input-single-output (MISO) system has been proposed to eliminate the need for digital beamforming by optimizing the transmit power at the base station (BS) and the phase shift of the SIM such that the sum rate is maximized. In that work, the authors proposed a computationally-efficient algorithm to solve the formulated problem. A customized deep reinforcement learning (DRL)-based approach has been proposed in~\cite{liu2025multi} to tackle the transmit power and SIM phase shift optimization in a similar multi-user MISO downlink scenario. Additionally, the same problem has been investigated with the design of a digital beamformer at BS along with the SIM phase shifts in~\cite{bahingayi2025refined}. The authors in that work concluded that the order of alternating optimization (AO) greatly affects the achieved performance. Moreover, iterative optimization of the SIM phase shift sub-problem may further enhance the system performance. Integration of SIM-assisted networks with integrated sensing and communication (ISAC) has been considered in~\cite{niu2024stacked}, where a SIM-enabled base station generates a beam pattern for simultaneously communicating with multiple downlink users and detecting a target. The authors formulated an optimization problem to maximize spectral efficiency through optimizing the SIM phase shifts and the power allocation at the base station. Also, in~\cite{ebrahimi2025stacked}, both SIM and simultaneously transmitting and reflecting RIS have been integrated together in THz ISAC system to address the challenge of multi-user communication and multi-target sensing. In that work, the SIM has been used to enable high-precision wavedomain beamforming, while the STAR-RIS has been used to achieve full-space coverage for improved multi-user sensing and communication capabilities.

Fewer research attempts have focused on SIM-assisted systems operating in the near-field regime~\cite{jia2024stacked,papazafeiropoulos2024near,li2025stacked}. In~\cite{jia2024stacked}, a SIM-assisted multi-user MISO near-field system has been proposed, where the SIM-enabled BS perform beamfocusing in the wave domain instead of the conventional digital beamfocusing. A SIM-assisted multi-user multiple-input-multiple-output (MIMO) system that operates in the near-field region has been proposed in~\cite{papazafeiropoulos2024near}. The authors in that work formulated a weighted sum rate maximization problem that optimizes the power allocation at the BS as well as the SIM phase shifts, where block coordinate descent algorithm is utilized to solve the formulated problem. In~\cite{li2025stacked}, a hybrid beamforming framework for a SIM-assisted wideband holographic multi-user MIMO system adopting a near-field channel model has been proposed, where the holographic beamformer is designed to maximize aggregate eigen-channel gains. More specifically, a minimum mean square error (MMSE)-based digital tramsnit precoding and iterative water-filling are employed to enable efficient multi-user transmission and power allocation. In addition, a layer-by-layer iterative algorithm is developed to optimize SIM phase shifts.

In the context of secure communication, SIMs have been adopted to enhance the performance of single-input-single-output (SISO) wireless systems by exploiting the additional spatial degrees of freedom introduced by multi-layer metasurface architectures. In particular, the authors in~\cite{niu2024enhancing,niu2024efficient} proposed a SIM-assisted SISO transmission framework in which the SIM is deployed at the transmitter to directly manipulate electromagnetic waves, enabling joint modulation, beamforming, and artificial noise generation in the wave domain. Closed-form solutions and AO algorithms were developed to configure SIM phase shifts and transmit power efficiently, aiming at improving secrecy rate without relying on multiple radio frequency (RF) chains. In that work, simulation results demonstrate that SIMs can effectively compensate for the limited spatial degrees of freedom in conventional SISO systems and significantly outperform benchmark single-layer or antenna-based schemes in terms of achievable secrecy performance.
\subsection{Motivation and Contribution}
A closer examination of the aforementioned studies reveals that most of the research efforts on SIM-assisted systems have focused on far-field operation and neglected the expanded near-field region because of the large SIM aperture size and potential high frequencies. Moreover, the few research attempts that considered near-field high-frequency SIM-assisted systems have neglected the vulnerability of the network to security issues due to the inherent broadcast nature of the channel. Although~\cite{niu2024enhancing,niu2024efficient} have investigated the effectiveness of SIM-assisted systems in improving physical-layer security, their designs generally rely on secrecy rate maximization, which assumes that the transmission is detectable by potential eavesdroppers. However, in emerging near-field high-frequency systems, preventing signal detection itself becomes a more fundamental requirement. Motivated by this fact, in this work we propose the adoption of covert communication frameworks that aims to minimize transmission detectability rather than merely securing the transmitted information. To the best of our knowledge, this is the first work to consider a SIM-assisted near-field multi-user covert communication system. The novel contributions of our work compared to the state-of-the-art are summarized in Table~\ref{tab:comparison}. The main contributions are as follows;
\begin{itemize}
    \item We propose a SIM-assisted multi-user MISO downlink covert communication framework operating in the near-field region. The proposed framework aims at maximizing the sum covert rate at all user equipment (UEs), while maintaining a minimum quality-of-service (QoS) at all UEs and minimizing the detection capabilities at multiple potential wardens.
    \item We analyze the detection error probability at each warden and derive the covertness constraint that ensures minimizing the detection capabilities at wardens.
    \item We formulate a sum covert rate maximization problem in which we design the digital beamfocusing vectors at the BS as well as the SIM phase shift matrices under minimum QoS at all UEs, power budget, phase shift, and covertness constraints. Due to the coupling between variables, we adopt AO to tackle the formulated problem, where we propose successive convex approximation (SCA)-based algorithms to solve both beamfocusing and phase shift problems alternately. Additionally, we adopt the widely used projected gradient ascent (PGA) algorithm to solve the phase shift problem and compare its performance with the SCA-based scheme.
    \item Extensive simulation results are provided to illustrate how different system parameters affect the achievable sum covert rate and to demonstrate the superiority of the proposed scheme over existing benchmarks.
\end{itemize}

\begin{table*}[t]
\centering
\caption{Comparison of Our Work with Existing SIM-Assisted Wireless Systems}
\label{tab:comparison}
\setlength{\tabcolsep}{6pt}
\renewcommand{\arraystretch}{1.2}
\begin{tabular}{l ccccccccccccc}
\toprule
 & \textbf{Our paper} & \cite{an2025stacked} & \cite{liu2025multi} & \cite{bahingayi2025refined} & \cite{niu2024stacked} & \cite{ebrahimi2025stacked} & \cite{jia2024stacked} & \cite{papazafeiropoulos2024near} & \cite{li2025stacked} \\
\midrule
Multi-user access                 & \cmark & \cmark & \cmark & \cmark & \cmark & \cmark & \cmark & \cmark & \cmark \\
\hdashline
Near-field channel model          & \cmark &  &  &  &  &  & \cmark &  \cmark& \cmark \\
\hdashline
Digital beamfocusing design                & \cmark &  &  & \cmark & &  &  &  & \cmark \\
\hdashline
PGA-based phase shift design          & \cmark & \cmark &  & \cmark & \cmark &  & \cmark & \cmark & \\
\hdashline
SCA-based phase shift design          & \cmark &  &  &  &  &  &  &  & \\
\hdashline
Covert communication & \cmark &  &  &  &  &  &  &  &  \\
\bottomrule
\end{tabular}
\end{table*}

The remainder of the paper is organized as follows; The SIM-assisted multi-user system and near-field channel models are presented in Section~\ref{sec_sys}. Section~\ref{sec_covert} provides the error detection probability analysis for different wardens. In Section~\ref{sec_opt}, the optimization problem formulation and the corresponding solution are presented, with some detailed derivations given in the Appendix. Simulation results and discussions are presented in Section~\ref{results}. Finally, Section~\ref{sec_conc} concludes the paper and outlines several directions for future research.

\textit{Notations}: Matrices and vectors are denoted by bold uppercase and lowercase letters, respectively. For a vector $\mathbf{a}$, $\mathrm{diag}(\mathbf{a})$ denotes a diagonal matrix whose main diagonal entries are given by the elements of $\mathbf{a}$. The operators $(\cdot)^T$ and $(\cdot)^H$ represent the transpose and Hermitian (conjugate transpose), respectively. The symbols $|\cdot|$, $\|\cdot\|$, and $\mathbb{E}[\cdot]$ denote the absolute value, the Euclidean norm, and the expectation operator. Moreover, $\mathcal{R}[\cdot]$ and $\mathcal{I}[\cdot]$ correspond to the real and imaginary parts of a complex-valued scalar, respectively. The operators $\lfloor \cdot \rfloor$, $\lceil \cdot \rceil$, and $\mathrm{mod}(\cdot)$ denote the floor, ceiling, and modulo operations, respectively.
\section{System Model}
\label{sec_sys}
In this section, we describe the proposed SIM-assisted near-field covert communication system, the signaling model, and the adopted near-field channel model.
\subsection{System Description and Signaling model}
\begin{figure}[!t]
\centerline{\includegraphics[width=0.5\textwidth]{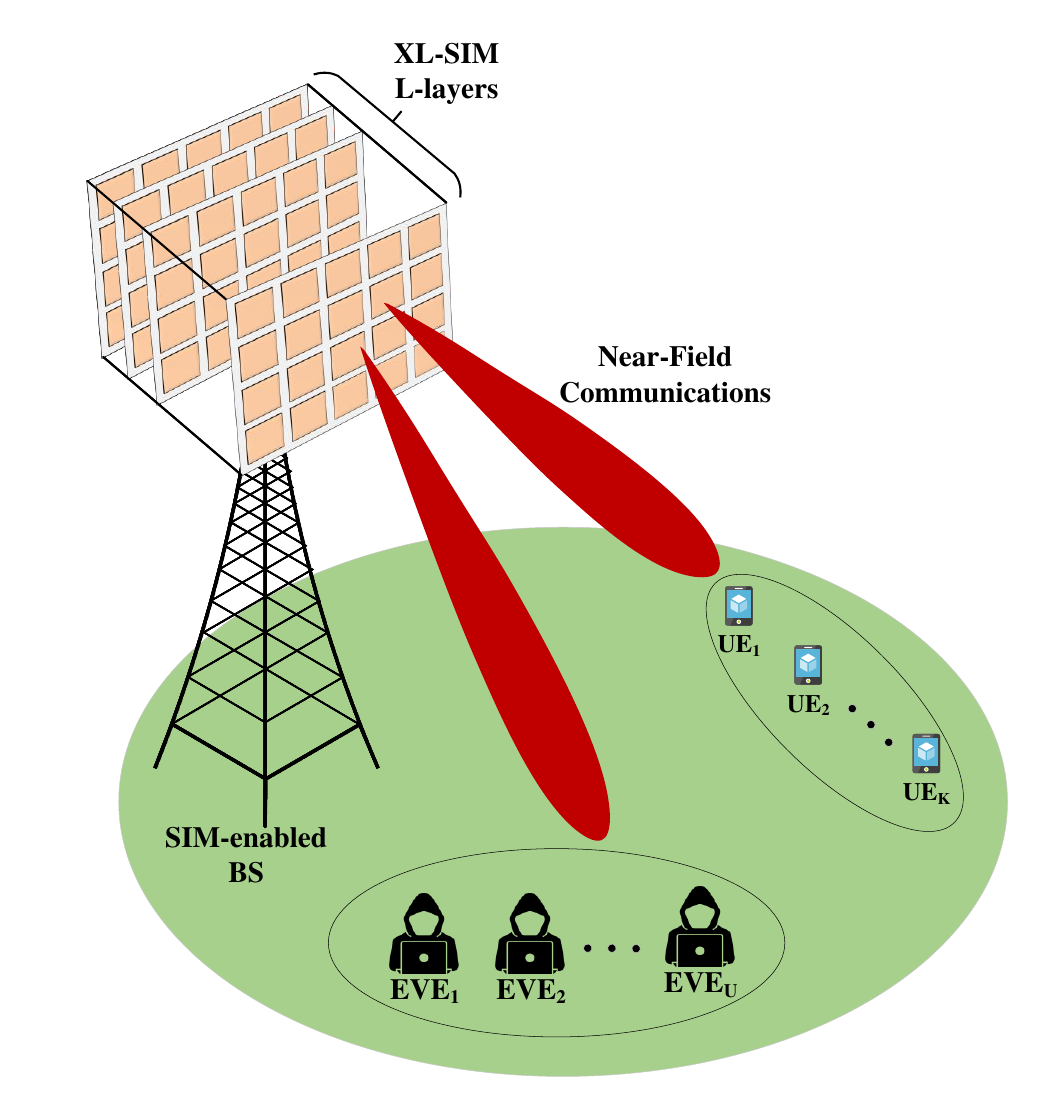}}
\caption{SIM-assisted multi-user multi-warden near-field downlink covert communication system.}
\label{fig_1}
\end{figure}
As depicted in Fig.~\ref{fig_1}, we consider an SIM-assisted downlink near-field covert communication system, which consists of a BS equipped with a uniform linear array (ULA) of $M$-antennas, a group of $K$ single-antenna legitimate UEs, and a group of $U$ single-antenna wardens. The BS operates in the high-frequency band and is assisted by a SIM that comprises $L$ equidistance-spaced reconfigurable metasurface layers, each composed of $N$ elements. Let $\mathcal{K} = \left\{1,2,\dots,K\right\}$, $\mathcal{U} = \left\{1,2,\dots,U \right\}$, $\mathcal{L} = \left\{1,2,\dots,L \right\}$, and $\mathcal{N} = \left\{1,2,\dots,N \right\}$ represent the sets of legitimate UEs, wardens, metasurface layers, and metasurface elements in each layer, respectively. We denote the complex-valued coefficient of the $n$-th element in the $l$-th layer by $\alpha_n^{l} e^{j\theta_n^l},~\forall n\in \mathcal{N}, l\in\mathcal{L}$, where $\alpha_n^{l} \in \left[0,1\right]$ and $\theta_n^l \in \left[ 0,2\pi\right)$ represent the amplitude and phase shift of the $n$-th element in the $l$-th layer, respectively. For maximum energy transfer, we set $\alpha_n^{l} = 1,~\forall n\in \mathcal{N}, l\in\mathcal{L}$. Hence, the diagonal phase shift matrix $\mathbf{\Theta}^l$ of the $l^{th}$ metasurface layer is given by:
\begin{equation}
    \mathbf{\Theta}^l = \text{diag} \left( e^{j\theta_1^l}, e^{j\theta_2^l}, \dots e^{j\theta_N^l}\right)\in\mathbb{C}^{N\times N}, ~\forall l\in \mathcal{L}. \label{Eq:1}
\end{equation}

Each SIM layer is modeled as a rectangular uniform planner array (UPA) of $N= N_x \times N_z$ elements, where $N_x$ and $N_z$ represent the number of elements in the $x$-axis and $z$-axis directions, respectively. Following the Rayleigh–Sommerfeld diffraction model~\cite{lin2018all,liu2022programmable}, the inter-layer propagation coefficient from the $\Tilde{n}$-th element in the $(l-1)$-th SIM layer to the $n$-th element in the $l$-th SIM layer, $\forall n,\Tilde{n} \in \mathcal{N}, l \in \mathcal{L}$, can be written as:
\begin{equation}
    w_{n,\Tilde{n}}^l = \frac{A \cos{\chi^l_{n,\Tilde{n}}}}{r^l_{n,\Tilde{n}}} \left( \frac{1}{2 \pi r^l_{n,\Tilde{n}}} - j\frac{1}{\lambda}\right) e^{2 \pi r^l_{n,\Tilde{n}}/\lambda}, \label{Eq:2}
\end{equation}
where $A$, $\chi^l_{n,\Tilde{n}}$, $r^l_{n,\Tilde{n}}$, and $\lambda$ are the area of each element, the angle between the propagation direction and the normal direction of the $\left(l-1 \right)$-th SIM layer, the propagation distance, and the wavelength, respectively. 

Hence, the overall SIM response can be expressed as:
\begin{equation}
    \mathbf{G} = \mathbf{\Theta}^L \mathbf{W}^L \dots \mathbf{\Theta}^2 \mathbf{W}^2 \mathbf{\Theta}^1 \mathbf{W}^1 \in \mathbb{C}^{N\times M}, \label{Eq:3}
\end{equation}
where $\mathbf{W}^l \in \mathbb{C}^{N\times N},~\forall~l\in\mathcal{L}/\left\{ 1\right\}$ is the propagation coefficient matrix from the $\left(l-1 \right)$-th SIM layer to the $l$-th SIM layer. The propagation coefficient matrix from the $1$-st SIM layer to the transmit antenna layer, $\mathbf{W}^1 \in \mathbb{C}^{N\times M}$, can also be obtained from~\eqref{Eq:2} by replacing $r^l_{n,\Tilde{n}},\chi^l_{n,\Tilde{n}},~\forall n,\Tilde{n} \in \mathcal{N}$ with $r^1_{n,m},\chi^1_{n,m}~\forall n \in \mathcal{N}, m \in \mathcal{M}$.

Let $\mathbf{x} = [x_1, x_2, \dots, x_K]^T \in \mathbb{C}^{K \times 1}$ denotes the column vector of the transmitted symbols for all UEs, where $\mathbb{E}[\left\| \mathbf{x} \right\|^2] = 1$ and $x_k$ represents the transmitted symbol to the $k$-th UE. Hence, the received signal at the $k$-th UE can be written as:
\begin{align}
    y_k &= \mathbf{h}_k^H \mathbf{G} \mathbf{V} \mathbf{x} + n_k \nonumber \\ &= \mathbf{h}_k^H \mathbf{G} \mathbf{v}_k x_k + \sum_{i = 1, i\neq k}^K \mathbf{h}_k^H \mathbf{G} \mathbf{v}_i x_i + n_k, \label{Eq:4}
\end{align}
where $\mathbf{h}_k \in \mathbb{C}^{N \times 1}$ is the near-field channel vector from the $L$-th SIM layer to the $k$-th UE, $\mathbf{V} \in \mathbb{C}^{M \times K}$ is the digital beamfocusing matrix at the BS with $\mathbf{v}_k \in \mathbb{C}^{M \times 1}$ denoting the beamfocusing vector of the $k$-th UE, and $n_k$ is the additive white Gaussian noise (AWGN) component with variance $\sigma_n^2$, i.e. $n_k \sim \mathcal{CN}\left( 0,\sigma_n^2\right)$. The received signal-to-interference-plus-noise ratio (SINR) at the $k$-th UE can be expressed as:
\begin{equation}
    \gamma_k = \frac{\left| \mathbf{h}_k^H \mathbf{G} \mathbf{v}_k\right|^2}{\sum_{i = 1, i\neq k}^K \left| \mathbf{h}_k^H \mathbf{G} \mathbf{v}_i\right|^2 + \sigma_n^2}. \label{Eq:5}
\end{equation}

The achieved sum rate of all UEs can be written as:
\begin{equation}
    R = \sum_{k = 1}^K R_k = \sum_{k = 1}^K \log_2 \left( 1+ \gamma_k \right). \label{Eq:6}
\end{equation}

\subsection{Near-Field Channel Model}
In this work, we adopt the uniform surface wave (USW) near-field line-of-sight (LoS) channel model~\cite{liu2023near}. As mentioned earlier, the SIM is placed in the $x$-$z$ plane and is composed of $N=N_x\times N_z$ elements, where $N_x = 2 \Tilde{N}_x +1$ is the number of elements in the $x$-direction and $N_z = 2 \Tilde{N}_z +1$ is the number of elements in the $z$-direction. Let the inter-element distance in the $x$-direction and $z$-direction be $d_x$ and $d_z$, respectively. Hence, the near-field LoS channel from the $L$-th layer to the $k$-th UE can be modeled as:
\begin{equation}
    \mathbf{h}_k = \beta_k e^{-j \frac{2\pi}{\lambda}r_k} \left[ \mathbf{a}_k^x \left( \theta_k, \phi_k, r_k\right) \otimes \mathbf{a}_k^z \left( \phi_k, r_k\right)\right], \label{Eq:7}
\end{equation}
where $\mathbf{a}_k^x \left( \theta_k, \phi_k, r_k\right)$ and $\mathbf{a}_k^z \left( \phi_k, r_k\right)$ are column vectors representing the array response in the $x$- and $z$-directions, respectively. In addition, $\beta_k = \beta_0 r_k^{-\eta}$, $r_k$, $\theta_k$, and $\phi_k$ denote the large-scale fading coefficient, the propagation distance, the azimuth angle, and the elevation angle of the link between the $L$-th SIM layer and the $k$-th UE, respectively, where $\beta_0 = \left(\frac{\lambda}{4\pi}\right)^2$ is the free space path loss at a reference distance $1$ m and $\eta$ is the path loss exponent. Assuming $d_x/r_k \ll 1$ and $d_z/r_k \ll 1$, and exploiting the Fresnel approximation~\cite{liu2023near}, the elements of the array response can be written as:
\begin{dmath}
    \left[ \mathbf{a}_k^x \left( \theta_k, \phi_k, r_k\right)\right]_{n_x} = e^{-j \frac{2 \pi}{\lambda} \left( - n_x d_x \cos{\theta_k} \sin{\phi_k} + \frac{n_x^2 d_x^2 \left( 1 - \cos^2{\theta_k} \sin^2{\phi_k}\right)}{2r_k}\right)}, \label{Eq:8}
\end{dmath}
and
\begin{equation}
    \left[ \mathbf{a}_k^z \left(\phi_k, r_k\right)\right]_{n_z} = e^{-j \frac{2 \pi}{\lambda} \left( - n_z d_z \cos{\phi_k} + \frac{n_z^2 d_z^2 \sin^2{\phi_k}}{2r_k}\right)}, \label{Eq:9}
\end{equation}
where $\left[ \mathbf{a}_k^x \left( \theta_k, \phi_k, r_k\right)\right]_{n_x}$ and $\left[ \mathbf{a}_k^z \left(\phi_k, r_k\right)\right]_{n_z}$ represent the $n_x$-th element in the column vector $\mathbf{a}_k^x \left( \theta_k, \phi_k, r_k\right)$ and the $n_z$-th element in the column vector $ \mathbf{a}_k^z \left(\phi_k, r_k\right)$, respectively, where $n_x \in \left\{ -\Tilde{N}_x, \dots, \Tilde{N}_x\right\}$ and $n_z \in \left\{ -\Tilde{N}_z, \dots, \Tilde{N}_z\right\}$. Similarly, the near-field LoS channel from the $L$-th SIM layer to the $u$-th warden, $\forall u\in \mathcal{U}$, can be modeled as:
\begin{equation}
    \mathbf{h}_u = \beta_u e^{-j \frac{2\pi}{\lambda}r_u} \left[ \mathbf{a}_u^x \left( \theta_u, \phi_u, r_u\right) \otimes \mathbf{a}_u^z \left( \phi_u, r_u\right)\right], \label{Eq:7b}
\end{equation}
where the array response in the $x$- and $z$-directions, $\mathbf{a}_u^x \left( \theta_u, \phi_u, r_u\right)$ and $\mathbf{a}_u^z \left( \phi_u, r_u\right)$, can be defined in similar ways to~\eqref{Eq:8} and~\eqref{Eq:9} . Additionally, $\beta_u=\beta_0r_u^{-\eta}$, $r_u$, $\theta_u$, and $\phi_u$ denote the large-scale fading coefficient, the propagation distance, the azimuth angle, and the elevation angle of the link between the $L$-th SIM layer and the $u$-th warden, respectively. In this work, we adopt a practical insider-threat model in which the wardens are initially legitimate users that are later identified as suspicious based on abnormal behavior detected via upper-layer network protocols. Consequently, their locations and CSI are assumed to be known at the BS through prior channel estimation and signaling procedures~\cite{liu2024ris, hu2025reconfigurable, chen2021enhancing}.
\section{Detection Performance Analysis}
\label{sec_covert}
Following~\cite{zhou2025star,liu2024ris}, the detection error probability (DEP) is utilized as the covertness metric in the proposed covert communication system. With common equal prior probabilities, let $\xi_u = P_{\text{FA}_u} + P_{\text{MD}_u}$ denotes the DEP at the $u$-th warden, where $P_{\text{FA}_u}$ and $P_{\text{MD}_u}$ represent the probabilities of false alarm and missed detection at the $u$-th warden, respectively. Let us assume that all wardens make $J$ observations in each time slot as in~\cite{jiang2021resource}. Hence, the received signal in the $j$-th observation at the $u$-th warden can be expressed as:
\begin{equation}
    y_u^{\left(j\right)} = \left\{\begin{matrix}
n_u^{\left(j\right)}, &  \mathcal{H}_0,\\
\sum_{i = 1}^K \mathbf{h}_u^H \mathbf{G} \mathbf{v}_i x_i^{\left(j\right)} + n_u^{\left(j\right)}, & \mathcal{H}_1, \\
\end{matrix}\right. \label{Eq:10}
\end{equation}
where $\mathcal{H}_0$ denotes the null hypothesis that the BS is not transmitting and $\mathcal{H}_1$ denotes the alternative hypothesis that the BS is transmitting. In addition, $n_u^{\left(j\right)} \sim \mathcal{CN}\left( 0,\sigma_n^2\right)$ is the AWGN component at the $u$-th warden. Let the received vector at the $u$-th warden be expressed as:
\begin{equation}
    \mathbf{y}_u = \left[ y_u^{\left(1\right)}, y_u^{\left(2\right)}, \dots, y_u^{\left(J\right)}\right]^T. \label{Eq:11}
\end{equation}

Since all the elements of $\mathbf{y}_u$ are independent and identical complex Gaussian random variables, the likelihood functions of $\mathbf{y}_u$ under both $\mathcal{H}_1$ and $\mathcal{H}_0$ can be expressed as follows:
\begin{equation}
    \left\{\begin{matrix}
\mathbb{P}_{1,u} = p_{\mathcal{H}_1,u}\left ( \mathbf{y}_u \right ) = \frac{1}{\pi^J \delta_{1,u}^J}e^{\frac{-\mathbf{y}_u^H \mathbf{y}_u}{\delta_{1,u}}} \\ \mathbb{P}_{0,u} = p_{\mathcal{H}_0,u}\left ( \mathbf{y}_u \right ) = \frac{1}{\pi^J \delta_{0,u}^J}e^{\frac{-\mathbf{y}_u^H \mathbf{y}_u}{\delta_{0,u}}},
\end{matrix}\right. \label{Eq:12}
\end{equation}
where $\delta_{1,u} = \sum_{i = 1}^K |\mathbf{h}_u^H \mathbf{G} \mathbf{v}_i|^2 + \sigma_u^2$ and $\delta_{0,u} = \sigma_u^2$. Assuming the minimum DEP at the $u$-th warden is $\xi_u^{\ast}$, the covertness constraint can be set as $\xi_u^{\ast} \ge 1-\epsilon$, where $\epsilon \in [0,1]$ is a very small positive number to ensure covertness. According to~\cite{yan2019gaussian}, for the optimal detector, the minimum DEP at the $u$-th warden is $\xi_u^{\ast} = 1-\mathcal{V}_u \left(\mathbb{P}_{0,u},\mathbb{P}_{1,u}\right)$, where $\mathcal{V}_u \left(\mathbb{P}_{0,u},\mathbb{P}_{1,u}\right)$ is the total variation distance (TVD) between $\mathbb{P}_{0,u}$ and $\mathbb{P}_{1,u}$. Following~\cite{yan2019gaussian,zhang2024robust}, the TVD between $\mathbb{P}_{0,u}$ and $\mathbb{P}_{1,u}$ can be expressed in terms of Pinsker's inequality as follows: 
\begin{equation}
    \mathcal{V}_u \left(\mathbb{P}_{0,u},\mathbb{P}_{1,u}\right) \le \sqrt{\frac{\mathcal{D}_u\left(\mathbb{P}_{1,u}\|\mathbb{P}_{0,u}\right)}{2}}, \label{Eq:13}
\end{equation}
where $\mathcal{D}_u\left(\mathbb{P}_{1,u}\|\mathbb{P}_{0,u}\right)$ is Kullback-Leibler (KL) divergence, which can be expressed as:
\begin{align}
\mathcal{D}_u\left(\mathbb{P}_{1,u}\|\mathbb{P}_{0,u}\right) &= \int_{-\infty}^{\infty} \mathbb{P}_{1,u} \ln{\left(\frac{\mathbb{P}_{1,u}}{\mathbb{P}_{0,u}} \right)} d\mathbf{y}_u \nonumber \\ &=J\nu\left(\frac{\delta_{1,u}-\delta_{0,u}}{\delta_{0,u}}\right),\label{Eq:14}
\end{align}
where $\nu\left(x\right) = x-\ln\left(1+x\right)$. Consequently, we can write $\xi_u^{\ast}  \ge 1-\sqrt{\frac{\mathcal{D}_u\left(\mathbb{P}_{1,u}\|\mathbb{P}_{0,u}\right)}{2}}$. Hence, the covertness constraint can be set as:
\begin{equation}
\mathcal{D}_u\left(\mathbb{P}_{1,u}\|\mathbb{P}_{0,u}\right) \le 2\epsilon^2. \label{Eq:15}
\end{equation}


\section{Optimization Problem Formulation and Solution}
\label{sec_opt}
The main aim of this work is to design the beamfocusing vectors at the BS and the SIM phase shift matrices to maximize the sum covert rate of all UEs under minimum rate requirements, power, and covert communication constraints. This can be formulated mathematically as follows:
\begin{subequations}
\begin{align}
	\mathcal{P}1: \max_{\left\{\mathbf{v}_k\right\}_{k=1}^{K}, \left\{\mathbf{\Theta}^l \right\}_{l=1}^{L}} \sum_{k=1}^{K} \log_2 \left(1+\gamma_k\right), \label{Eq:16a} \\
	& \hspace{-2in} \textrm{Subject to:} \nonumber \\
	& \hspace{-1.5in} R_k \ge R_k^{\text{min}},~\forall~k\in \mathcal{K}, \label{Eq:16b}\\
 & \hspace{-1.5in} \sum_{k=1}^K \left\| \mathbf{v}_k\right\|^2 \le P_{\text{max}},~\forall~k\in \mathcal{K}, \label{Eq:16c} \\
 & \hspace{-1.5in} \left| \theta_n^l\right| = 1,~\forall~n \in \mathcal{N}, l\in \mathcal{L}, \label{Eq:16d} \\
 & \hspace{-1.5in} \mathcal{D}_u\left(\mathbb{P}_{1,u}\|\mathbb{P}_{0,u}\right) \le 2\epsilon^2,~\forall~ u \in \mathcal{U}. \label{Eq:16e}
\end{align}
\label{Eq:16}
\end{subequations}

Constraint~\eqref{Eq:16b} ensures that the achieved covert rate at each UE is above the minimum covert rate
requirement $R_k^{\text{min}}$, while constraint~\eqref{Eq:16c} ensures that the total transmit power at the BS is below the maximum power budget $P_{\text{max}}$. Additionally, constraint~\eqref{Eq:16d} ensures that the amplitude and the phase shift of each SIM meta atom is set as $\alpha_n^{l} = 1$ and $\theta_n^l \in \left[ 0,2\pi\right)~\forall n\in \mathcal{N}, l\in\mathcal{L}$. Finally, constraint~\eqref{Eq:16e} ensures that optimal DEP at each warden is above a given threshold to achieve covertness. It's clear that problem $\mathcal{P}1$ is highly non-convex with respect to the optimization variables $\left\{\mathbf{v}_k\right\}_{k=1}^{K}$ and $\left\{\mathbf{\Theta}^l \right\}_{l=1}^{L}$ due to the coupling between the variables in the non-convex objective function,~\eqref{Eq:16a}, and the non-convex  constraints,~\eqref{Eq:16b} and~\eqref{Eq:16d}. To tackle this issue, we are going to decouple $\mathcal{P}1$ into two disjoint sub-problems and adopt AO until convergence. In the first sub-problem we will design the beamfocusing vectors for fixed phase shift matrices at different SIM layers, while in the second sub-problem, we will design the phase shift matrices for fixed beamfocusing vectors.
\subsection{Beamfocusing Vectors Optimization}
Assuming fixed phase shift matrices $\left\{\mathbf{\Theta}^l \right\}_{l=1}^{L}$, the beamfocuing vectors optimization problem can be formulated as:
\begin{subequations}
\begin{align}
	\mathcal{P}2: \max_{\left\{\mathbf{v}_k\right\}_{k=1}^{K}} \sum_{k=1}^{K} \log_2 \left(1+\gamma_k\right), \label{Eq:17a} \\
	& \hspace{-1.5in} \textrm{Subject to:} \nonumber \\
	& \hspace{-1in} \eqref{Eq:16b},~\eqref{Eq:16c},~\text{and}~\eqref{Eq:16e}. \label{Eq:17b}
\end{align}
\label{Eq:17}
\end{subequations}

Problem $\mathcal{P}2$ remains non-convex due to the fractional nature of the SINR term $\gamma_k$ in the objective function~\eqref{Eq:17a} and in the rate requirement constraint~\eqref{Eq:16b}. To handle this non-convexity, we adopt SCA framework to transform $\mathcal{P}2$ into a convex one. We introduce the set of positive auxiliary variables $\rho_k$ and $\varpi_k,~\forall~k\in\mathcal{K}$ such that:
\begin{equation}
    \frac{\left| \mathbf{h}_k^H \mathbf{G} \mathbf{v}_k\right|^2}{\varpi_k} \ge \rho_k. \label{Eq:18}
\end{equation}
and
\begin{equation}
    \sum_{i = 1, i\neq k}^K \left| \mathbf{h}_k^H \mathbf{G} \mathbf{v}_i\right|^2 + \sigma_n^2 \le \varpi_k. \label{Eq:19}
\end{equation}

Although~\eqref{Eq:18} is still non-convex, it is apparent that~\eqref{Eq:19} is convex with respect to $\mathbf{v}_k$. Since the left-hand side of~\eqref{Eq:18} is lower-bounded by its first-order Taylor expansion around an initial point $\left(\mathbf{v}_k^{\left(m\right)},\varpi_k^{\left(m\right)}\right)$ in the $m$-th iteration, we get
\begin{align}
    \frac{\left| \mathbf{h}_k^H \mathbf{G} \mathbf{v}_k\right|^2}{\varpi_k} &\ge -\left( \frac{\left| \mathbf{h}_k^H \mathbf{G} \mathbf{v}_k^{\left(m\right)}\right|}{\varpi_k^{\left(m\right)}}\right)^2 \varpi_k \nonumber \\ &+ \frac{2\mathcal{R}\left[\left(\mathbf{h}_k^H \mathbf{G} \mathbf{v}_k^{\left(m\right)}\right)^H \left(\mathbf{h}_k^H \mathbf{G} \mathbf{v}_k\right)\right]}{\varpi_k^{\left(m\right)}}  \equiv {\Lambda_k^{\text{Taylor}}}^{\left(m\right)}. \label{Eq:20}
\end{align}

Now, the right-hand side of~\eqref{Eq:20} is an affine function that can be used to approximate the left-hand side of~\eqref{Eq:18}. Regarding constraint~\eqref{Eq:16e}, it can be rewritten as:
\begin{equation}
    \frac{\delta_{1,u}}{\delta_{0,u}} - \ln{\frac{\delta_{1,u}}{\delta_{0,u}}} - 1 \le \frac{2\epsilon^2}{J}. \label{Eq:21}
\end{equation}

We introduce the positive set of auxiliary variables $\tau_u,~\forall~u\in\mathcal{U}$, such that $\delta_{1,u} \le \tau_u \delta_{0,u}$. Hence, problem $\mathcal{P}2$ can be reformulated in the $m$-th iteration as:
\begin{subequations}
\begin{align}
	\mathcal{P}3: \max_{\left\{\mathbf{v}_k,\rho_k,\varpi_k\right\}_{k=1}^{K},\left\{ \tau_u\right\}_{u=1}^U}~~\sum_{k=1}^{K} \log_2 \left(1+\rho_k\right), \label{Eq:22a} \\
	& \hspace{-2.5in} \textrm{Subject to:} \nonumber \\
	& \hspace{-2in} \rho_k \ge \gamma^{\text{min}},~\forall~k\in \mathcal{K}, \label{Eq:22b}\\
    & \hspace{-2in} {\Lambda_k^{\text{Taylor}}}^{\left(m\right)} \ge \rho_k,~\forall~k\in \mathcal{K}, \label{Eq:22c}\\
    & \hspace{-2in} \sum_{i = 1, i\neq k}^K \left| \mathbf{h}_k^H \mathbf{G} \mathbf{v}_i\right|^2 + \sigma_n^2 \le \varpi_k,~\forall~k\in \mathcal{K}, \label{Eq:22d}\\
    & \hspace{-2in} \sum_{i = 1}^K |\mathbf{h}_u^H \mathbf{G} \mathbf{v}_i|^2 \le \sigma_u^2\left(\tau_u - 1 \right),~\forall~u\in \mathcal{U}, \label{Eq:22e}\\
    & \hspace{-2in} \tau_u - \ln{\tau_u} - 1 \le \frac{2\epsilon^2}{J},~\forall~u\in \mathcal{U}, \label{Eq:22f}\\
    & \hspace{-2in} \eqref{Eq:16c}, \label{Eq:22g}
\end{align}
\label{Eq:22}
\end{subequations}
where $\gamma^{\text{min}} = 2^{R_k^{\text{min}}} - 1$. Now, problem $\mathcal{P}3$ is convex and can be directly solved using standard convex optimization tools, such as CVX toolbox~\cite{grant2009cvx}.
\subsection{SIM Phase Shift Optimization}
Assuming fixed beamfocusing vectors $\left\{\mathbf{v}_k \right\}_{k=1}^{K}$, the SIM phase shift optimization problem can be formulated as:
\begin{subequations}
\begin{align}
	\mathcal{P}4: \max_{\left\{\mathbf{\Theta}^l\right\}_{l=1}^{L}} \sum_{k=1}^{K} \log_2 \left(1+\gamma_k\right), \label{Eq:23a} \\
	& \hspace{-1.5in} \textrm{Subject to:} \nonumber \\
	& \hspace{-1in} \eqref{Eq:16b},~\eqref{Eq:16d},~\text{and}~\eqref{Eq:16e}. \label{Eq:23b}
\end{align}
\label{Eq:23}
\end{subequations}

Problem $\mathcal{P}4$ is challenging due to the coupling between the phase shift matrices among all SIM layers in the objective function~\eqref{Eq:23a} as well as the constraints~\eqref{Eq:16b}, ~\eqref{Eq:16d}, and~\eqref{Eq:16e}. To solve this problem, we will first adopt SCA to solve the problem layer-by-layer. Then, we will solve the same problem by adopting the PGA algorithm~\cite{An2023stacked}. 
\subsubsection{Per Layer SCA-based Phase Shift Design}
In this scheme, we are going to solve the problem layer-by-layer, where to optimize the phase shift of the $l$-th layer, the phase shift of other layers in the set $\mathcal{L}/\left\{ l\right\}$ will be considered fixed. Consequently, we have to separate $\mathbf{\Theta}^l$ from the matrix $\mathbf{G}$. Denote $\mathbf{G} = \mathbf{G}_L^l \mathbf{\Theta}^{l}\mathbf{G}_R^l$, where $\mathbf{G}_L^l$ and $\mathbf{G}_R^l$ are given by:
    \begin{equation}
        \mathbf{G}_L^l = 
\begin{cases}
\mathbf{\Theta}^L \mathbf{W}^L \dots \mathbf{\Theta}^{l+1} \mathbf{W}^{l+1}, & \text{if } l \neq L, \\
\mathbf{I}_N, & \text{if } l = L,
\end{cases} \label{Eq:31}
    \end{equation}
    \begin{equation}
        \mathbf{G}_R^l = 
\begin{cases}
\mathbf{W}^l\mathbf{\Theta}^{l-1}\mathbf{W}^{l-1} \dots \mathbf{\Theta}^{1} \mathbf{W}^{1}, & \text{if } l \neq 1, \\
\mathbf{W}^{1}, & \text{if } l = 1.
\end{cases} \label{Eq:32}
    \end{equation}

As we did in the beamfocusing problem, to handle the fractional SINR term in the objective~\eqref{Eq:23a}, we introduce the set of auxiliary variables $\varsigma_k$ and $\upsilon_k,~\forall k \in \mathcal{K}$ such that:
\begin{equation}
    \frac{\left| \pmb{\phi}_l^T \tilde{\mathbf{h}}_k\right|^2}{\varsigma_k} \ge \upsilon_k, \label{Eq:18-1}
\end{equation}
and
\begin{equation}
    \sum_{i = 1, i\neq k}^K \left| \pmb{\phi}_l^T \tilde{\mathbf{h}}_i\right|^2 + \sigma_n^2 \le \varsigma_k, \label{Eq:19-1}
\end{equation}
where $\mathbf{h}_k^H \mathbf{G} \mathbf{v}_k=\pmb{\phi}_l^T \tilde{\mathbf{h}}_k$ $\mathbf{h}_k^H \mathbf{G} \mathbf{v}_i=\pmb{\phi}_l^T \tilde{\mathbf{h}}_i$, where  $\tilde{\mathbf{h}}_k=\text{diag}\left( \mathbf{h}_k^H \mathbf{G}_L^l\right) \mathbf{G}_R^l\mathbf{v}_k$ and $\tilde{\mathbf{h}}_i=\text{diag}\left( \mathbf{h}_k^H \mathbf{G}_L^l\right) \mathbf{G}_R^l\mathbf{v}_i$. In addition, $\pmb{\phi}_l = \left[ e^{j\theta_1^l}, e^{j\theta_2^l}, \dots e^{j\theta_N^l}\right]^T$ is the column vector that comprises the phase shift of the $l$-th layers. Equation~\eqref{Eq:19-1} is convex with respect to $\pmb{\phi}_l$. However,~\eqref{Eq:18-1} is still non-convex. Since the left-hand side of~\eqref{Eq:18-1} is lower-bounded by its first-order Taylor expansion around an initial point $\left(\pmb{\phi}_l^{\left(m\right)},\varsigma_k^{\left(m\right)}\right)$ at the $m$-th iteration, we get
\begin{align}
    \frac{\left| \pmb{\phi}_l^T \tilde{\mathbf{h}}_k\right|^2}{\varsigma_k} &\ge -\left( \frac{\left| \left( \pmb{\phi}_l^T\right)^{\left(m\right)} \tilde{\mathbf{h}}_k\right|}{\varsigma_k^{\left(m\right)}}\right)^2 \varsigma_k \nonumber \\ &+ \frac{2\mathcal{R}\left[\left(\left( \pmb{\phi}_l^T\right)^{\left(m\right)} \tilde{\mathbf{h}}_k\right)^H \left( \pmb{\phi}_l^T \tilde{\mathbf{h}}_k\right)\right]}{\varsigma_k^{\left(m\right)}}  \equiv {\Xi_k^{\text{Taylor}}}^{\left(m\right)}. \label{Eq:20-1}
\end{align}

Now, the right-hand side of~\eqref{Eq:20-1} is an affine function that can be used to approximate the left-hand side of~\eqref{Eq:18-1}. The constraint~\eqref{Eq:16d} can be relaxed to $\left| \theta_n^l\right| \le 1,~\forall~n \in \mathcal{N}, l\in \mathcal{L}$. Regarding the constraint~\eqref{Eq:16e}, it can be handled in a similar manner as in the beamfocusing sub-problem by introducing the positive set of auxiliary variables $\zeta_u,~\forall~u\in\mathcal{U}$, such that $\delta_{1,u} \le \zeta_u \delta_{0,u}$. Hence, problem $\mathcal{P}4$ can be reformulated in the $m$-th iteration as:
\begin{subequations}
\begin{align}
	\mathcal{P}5: \max_{\left\{\pmb{\phi}_l\right\}_{l=1}^{L},\left\{\upsilon_k,\varsigma_k\right\}_{k=1}^{K},\left\{ \zeta_u\right\}_{u=1}^U}~~\sum_{k=1}^{K} \log_2 \left(1+\upsilon_k\right), \label{Eq:22-1a} \\
	& \hspace{-2.5in} \textrm{Subject to:} \nonumber \\
	& \hspace{-2in} \upsilon_k \ge \gamma^{\text{min}},~\forall~k\in \mathcal{K}, \label{Eq:22-1b}\\
    & \hspace{-2in} {\Xi_k^{\text{Taylor}}}^{\left(m\right)} \ge \upsilon_k,~\forall~k\in \mathcal{K}, \label{Eq:22-1c}\\
    & \hspace{-2in} \sum_{i = 1, i\neq k}^K \left| \pmb{\phi}_l^T \tilde{\mathbf{h}}_i\right|^2 + \sigma_n^2 \le \varsigma_k,~\forall~k\in \mathcal{K}, \label{Eq:22-1d}\\
    & \hspace{-2in} \sum_{i = 1}^K \left|\pmb{\phi}_l^T \tilde{\mathbf{h}}_u\right|^2 \le \sigma_u^2\left(\zeta_u - 1 \right),\forall~u\in \mathcal{U}, \label{Eq:22-1e}\\
    & \hspace{-2in} \zeta_u - \ln{\zeta_u} - 1 \le \frac{2\epsilon^2}{J},~\forall~u\in \mathcal{U}, \label{Eq:22-1f}\\
    & \hspace{-2in} \left| \theta_n^l\right| \le 1,~\forall~n \in \mathcal{N}, l\in \mathcal{L}, \label{Eq:22-1g}
\end{align}
\label{Eq:22-1}
\end{subequations}
where $\mathbf{h}_u^H \mathbf{G} \mathbf{v}_i=\pmb{\phi}_l^T \tilde{\mathbf{h}}_u$, where $\tilde{\mathbf{h}}_u = \text{diag}\left( \mathbf{h}_u^H \mathbf{G}_L^l\right) \mathbf{G}_R^l\mathbf{v}_i$. Now, problem $\mathcal{P}5$ is convex and can be directly solved using standard convex optimization tools, such as CVX toolbox~\cite{grant2009cvx}. After solving the problem, the obtained phase shifts can be projected back onto the unit modulus set.
\subsubsection{PGA-based Phase Shift Design}

First, to facilitate handling $\mathcal{P}4$, we transform constraints ~\eqref{Eq:16b} and~\eqref{Eq:16e} as penalty terms in the objective function to obtain the following problem:
\begin{subequations}
\begin{align}
	\mathcal{P}6: \max_{\left\{\mathbf{\Theta}^l\right\}_{l=1}^{L}} F\left( \theta_n^l\right) =  R\left( \theta_n^l\right) + \mu_1 A\left( \theta_n^l\right) - \mu_2 B\left( \theta_n^l\right), \label{Eq:24a} \\
	& \hspace{-3in} \textrm{Subject to:} \nonumber \\
	& \hspace{-2.8in} \left| \theta_n^l\right| = 1,~\forall~n \in \mathcal{N}, l\in \mathcal{L}, \label{Eq:24b}
\end{align}
\label{Eq:24}
\end{subequations}
where $\mu_1,~\mu_2 \in \mathbb{R}^+$ are the penalty factors that are used to force compliance of the constraints. In addition, $$A\left( \theta_n^l\right) = \sum_{k=1}^K \min\left(\gamma_k\left( \theta_n^l\right)-\gamma^{\text{min}},0\right)^2,$$ and \small $$B\left( \theta_n^l\right) = \sum_{u=1}^U \max\left(\frac{\delta_{1,u}\left( \theta_n^l\right)}{\delta_{0,u}} - \ln{\frac{\delta_{1,u}\left( \theta_n^l\right)}{\delta_{0,u}}} - 1 - \frac{2\epsilon^2}{J},0\right)^2.$$ \normalsize

Now, the structure of $\mathcal{P}6$ makes the PGA method a suitable choice to solve it, since the elements of $\mathbf{\Theta}^l$ lie on the unit circle in the complex plane. PGA can be applied to handle $\mathcal{P}6$ according to the following steps:

\textbf{\textit{Step 1 (Initialize the SIM Phase Shifts)}:} In this part, we randomly initialize the SIM phase shifts $\theta_n^l \in \left[ 0,2\pi\right)$ such that $\left| \theta_n^l\right| = 1,~\forall~n\in\mathcal{N},~l\in\mathcal{L}$.

\textbf{\textit{Step 2 (Calculate the Partial Derivatives)}:} We calculate the partial derivatives of the objective function $F$ with respect to the phase shift of the $n$-th meta atom in the $l$-th layer, i.e., $\theta_n^l,~\forall~n\in\mathcal{N},~l\in\mathcal{L}$ as follows:
\begin{equation}
    \frac{\partial F\left( \theta_n^l\right)}{\partial \theta_n^l} = \frac{\partial R\left( \theta_n^l\right)}{\partial \theta_n^l} + \mu_1 \frac{\partial A\left( \theta_n^l\right)}{\partial \theta_n^l} - \mu_2 \frac{\partial B\left( \theta_n^l\right)}{\partial \theta_n^l}. \label{Eq:25}
\end{equation}

\textbf{\textit{Step 3 (Normalize the Partial Derivatives)}:} After calculating all the partial derivatives of the objective function $F$ with respect to $\theta_n^l$, we normalize the partial derivatives to mitigate oscillation during optimization as follows~\cite{niu2024stacked}:
\begin{equation}
    \frac{\partial F\left( \theta_n^l\right)}{\partial \theta_n^l} \leftarrow \frac{\pi}{\kappa} \frac{\partial F\left( \theta_n^l\right)}{\partial \theta_n^l}, \label{Eq:25-1}
\end{equation}
where $\kappa = \max\left(\frac{\partial F\left( \theta_n^l\right)}{\partial \theta_n^l}\right), \forall n\in \mathcal{N}, l \in \mathcal{L}$.

\textbf{\textit{Step 4 (Update the SIM Phase Shifts)}:} After calculating all the partial derivatives of the objective function $F$ with respect to $\theta_n^l$, we update the phase shifts as follows:
\begin{equation}
    \theta_n^l \leftarrow \theta_n^l + \alpha  \frac{\partial F\left( \theta_n^l\right)}{\partial \theta_n^l},\label{Eq:26}
\end{equation}
where $\alpha$ denotes the Armijo step size, which is updated at each iteration according to the backtracking line search procedure~\cite{An2023stacked,niu2024stacked}.

\textbf{\textit{Step 5 (Iterate Until Convergence)}:}
The last step is to repeat steps 2 and 3 until the improvement in the objective function $F$ falls below a given threshold.

To calculate the partial derivatives of the objective function $F$ in~\eqref{Eq:25}, we start by calculating the partial derivatives of the sum covert rate $R\left( \theta_n^l\right)$. Applying the chain rule, we get:
\begin{equation}
    \frac{\partial R\left( \theta_n^l\right)}{\partial \theta_n^l} = \frac{1}{\ln{2}} \sum_{k=1}^K \frac{1}{1+ \gamma_k\left( \theta_n^l\right)} \frac{\partial \gamma_k\left( \theta_n^l\right)}{\partial \theta_n^l}. \label{Eq:27}
\end{equation}

\begin{thm}
    The partial derivatives of $\frac{\partial \gamma_k}{\partial \theta_n^l}$ is given by:
    \begin{equation}
        \frac{\partial \gamma_k\left( \theta_n^l\right)}{\partial \theta_n^l} = 2\varrho_k \left( \gamma_k\left( \theta_n^l\right) \left(\sum_{i = 1, i\neq k}^K \psi_{k,i} \right) - \psi_{k,k} \right), \label{Eq:28}
    \end{equation}
    where
    \begin{equation}
        \varrho_k = \frac{1}{\sum_{i = 1, i\neq k}^K \left| \mathbf{h}_k^H \mathbf{G} \mathbf{v}_i\right|^2 + \sigma_n^2}, \label{Eq:29}
    \end{equation}
    and
    \begin{equation}
        \psi_{k,i} = \mathcal{I} \left[e^{j \theta_n^l} \left(\mathbf{h}_k^H \mathbf{G} \mathbf{v}_i\right)^* \mathbf{h}_k^H \mathbf{G}_L^l \mathbf{E}_{n,n} \mathbf{G}_R^l \mathbf{v}_i\right], \label{Eq:30}
    \end{equation} 
    \label{thm1}
\end{thm}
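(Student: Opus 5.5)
The plan is a direct chain-rule computation that uses the factorization $\mathbf{G} = \mathbf{G}_L^l \mathbf{\Theta}^{l}\mathbf{G}_R^l$ from \eqref{Eq:31}--\eqref{Eq:32}. The point of this factorization is that $\mathbf{G}_L^l$ and $\mathbf{G}_R^l$ do not depend on $\theta_n^l$.

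First, write $\mathbf{\Theta}^l = \sum_{n'=1}^N e^{j\theta_{n'}^l}\mathbf{E}_{n',n'}$, where $\mathbf{E}_{n,n}$ is the matrix with a single one in position $(n,n)$. Then
\begin{equation*}
\frac{\partial \mathbf{G}}{\partial \theta_n^l} = j e^{j\theta_n^l}\,\mathbf{G}_L^l \mathbf{E}_{n,n}\mathbf{G}_R^l .
\end{equation*}
Denote $a_{k,i} = \mathbf{h}_k^H \mathbf{G}\mathbf{v}_i$ and $c_{k,i} = \mathbf{h}_k^H \mathbf{G}_L^l \mathbf{E}_{n,n}\mathbf{G}_R^l \mathbf{v}_i$. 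It follows that $\partial a_{k,i}/\partial\theta_n^l = j e^{j\theta_n^l} c_{k,i}$.

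Second, since $\theta_n^l$ is a real variable, $\partial |a_{k,i}|^2/\partial\theta_n^l = 2\mathcal{R}\big[a_{k,i}^* \,\partial a_{k,i}/\partial\theta_n^l\big]$. Using $\mathcal{R}[jz] = -\mathcal{I}[z]$, this gives
\begin{equation*}
\frac{\partial |a_{k,i}|^2}{\partial\theta_n^l} = -2\,\mathcal{I}\big[e^{j\theta_n^l} a_{k,i}^* c_{k,i}\big] = -2\psi_{k,i},
\end{equation*}
with $\psi_{k,i}$ exactly as defined in \eqref{Eq:30}.

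Third, apply the quotient rule to $\gamma_k = S_k/I_k$, where $S_k = |a_{k,k}|^2$ and $I_k = \sum_{i\neq k}|a_{k,i}|^2 + \sigma_n^2 = 1/\varrho_k$. Since $\sigma_n^2$ does not depend on $\theta_n^l$, $\partial I_k/\partial\theta_n^l = -2\sum_{i\neq k}\psi_{k,i}$. Therefore
\begin{equation*}
\frac{\partial\gamma_k}{\partial\theta_n^l} = \varrho_k \frac{\partial S_k}{\partial\theta_n^l} - \gamma_k\varrho_k\frac{\partial I_k}{\partial\theta_n^l} = 2\varrho_k\Big(\gamma_k\sum_{i\neq k}\psi_{k,i} - \psi_{k,k}\Big),
\end{equation*}
which is \eqref{Eq:28}.

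The only delicate step is the sign and conjugation bookkeeping when differentiating $|a_{k,i}|^2$ with respect to a real phase. One must check that the conjugate sits on $a_{k,i}$ and not on $c_{k,i}$, as in \eqref{Eq:30}, and that the factor $j$ turns the real part into minus an imaginary part. This sign is what produces the ordering $\gamma_k\sum\psi_{k,i} - \psi_{k,k}$ rather than its negative. Everything else is routine.
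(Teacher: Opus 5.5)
Your proposal is correct and follows essentially the same route as the paper's appendix. Both arguments differentiate $\mathbf{\Theta}^l$ to get $je^{j\theta_n^l}\mathbf{E}_{n,n}$, carry this through the $\theta_n^l$-independent factors $\mathbf{G}_L^l$ and $\mathbf{G}_R^l$, use $2\mathcal{R}[a^{*}\,\partial a/\partial\theta_n^l]$ together with $\mathcal{R}[jz]=-\mathcal{I}[z]$ to obtain $-2\psi_{k,i}$, and finish with the quotient rule; writing the quotient rule as $\varrho_k\,\partial S_k-\gamma_k\varrho_k\,\partial I_k$ is only a tidier form of the paper's final substitution.
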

where $\mathbf{E}_{n,n}$ is an $N\times N$ matrix whose entries are all zero except for the $(n,n)$-th element, which is equal to 1.
\proof See the Appendix.

Using the result of Theorem~\ref{thm1}, the partial derivative of the second term on the right-hand side of~\eqref{Eq:25} is given as:
\begin{equation}
    \frac{\partial A\left( \theta_n^l\right)}{\partial \theta_n^l} =2\sum_{k=1}^K 1_{\left[\gamma_k<\gamma^{\text{min}}\right]} \left(\gamma_k\left( \theta_n^l\right) - \gamma^{\text{min}}\right) \frac{\partial \gamma_k\left( \theta_n^l\right)}{\partial \theta_n^l}, \label{Eq:33}
\end{equation}
where $1_{\left[\gamma_k<\gamma^{\text{min}}\right]}$ denotes the indicator function that equals 1 to penalize the objective function when constraint~\eqref{Eq:16b} violates and 0 otherwise. 

To calculate the partial derivative of the last term on the right-hand side of~\eqref{Eq:25}, denote
\begin{dmath}
    g_u\left( \theta_n^l\right) = \frac{\delta_{1,u}\left( \theta_n^l\right)}{\delta_{0,u}} - \ln{\frac{\delta_{1,u}\left( \theta_n^l\right)}{\delta_{0,u}}} - 1 = z_u\left( \theta_n^l\right)-\ln{z_u\left( \theta_n^l\right)} -1, \label{Eq:34}
\end{dmath}
where $z_u\left( \theta_n^l\right) = \frac{\sum_{k = 1}^K |\mathbf{h}_u^H \mathbf{G} \mathbf{v}_k|^2 + \sigma_u^2}{\sigma_u^2}$. 

Now, we can write
\begin{equation}
    \frac{\partial B\left( \theta_n^l\right)}{\partial \theta_n^l} =2\sum_{u=1}^U 1_{\left[g_u>\frac{2\epsilon^2}{J}\right]} \left(g_u\left( \theta_n^l\right) - \frac{2\epsilon^2}{J}\right) \frac{\partial g_u\left( \theta_n^l\right)}{\partial \theta_n^l}, \label{Eq:35}
\end{equation}
where, following similar procedures as in Appendix~\ref{App1},
\begin{dmath}
    \frac{\partial g_u\left( \theta_n^l\right)}{\partial \theta_n^l} = \left(1-\frac{1}{z_u\left( \theta_n^l\right)}\right)\frac{\partial z_u\left( \theta_n^l\right)}{\partial \theta_n^l} = \frac{-2}{\sigma_u^2}\left(1-\frac{1}{z_u\left( \theta_n^l\right)}\right) \times \sum_{k = 1}^K \mathcal{I}\left[e^{j \theta_n^l} \left(\mathbf{h}_u^H \mathbf{G} \mathbf{v}_k\right)^* \mathbf{h}_u^H \mathbf{G}_L^l \mathbf{E}_{n,n} \mathbf{G}_R^l \mathbf{v}_k\right]. \label{Eq:36}
\end{dmath}

Based on \eqref{Eq:27}, \eqref{Eq:33}, and \eqref{Eq:35}, the gradient of the objective function $F$ with respect to the phase shift of the $n$-th meta-atom in the $l$-th layer can be derived, and is given in \eqref{Eq:37} at the top of the next page.
\begin{figure*}[!t]
\begin{dmath}
    \frac{\partial F\left( \theta_n^l\right)}{\partial \theta_n^l} = \sum_{k=1}^K \frac{\partial \gamma_k\left( \theta_n^l\right)}{\partial \theta_n^l}\left(\frac{1}{\ln{2}\left(1+ \gamma_k\left( \theta_n^l\right)\right)} + 2\mu_1 \times 1_{\left[\gamma_k<\gamma^{\text{min}}\right]}  \left(\gamma_k\left( \theta_n^l\right) - \gamma^{\text{min}}\right)\right) -2 \mu_2 \sum_{u=1}^U \left(1_{\left[g_u>\frac{2\epsilon^2}{J}\right]} \left(g_u\left( \theta_n^l\right) - \frac{2\epsilon^2}{J}\right) \frac{\partial g_u\left( \theta_n^l\right)}{\partial \theta_n^l}\right). \label{Eq:37}
\end{dmath}
\vspace*{2pt}
\hrulefill
\end{figure*}
\subsection{Overall Algorithm and Complexity Analysis}

Based on the previously formulated sub-problems and their corresponding solution methods, the overall procedure of the two proposed AO algorithms is summarized in Algorithm~\ref{alg1} for the SCA-based phase shift design and in Algorithm~\ref{alg2} for the PGA-based phase shift design. 
\begin{algorithm}
\caption{AO algorithm with SCA-based phase shift design}\label{alg1}
\begin{algorithmic}[1]
\STATE \textbf{Input:} Channel vectors $\mathbf{h}_k,\forall k \in \mathcal{K}$, $\mathbf{h}_u,\forall u \in \mathcal{U}$, SIM propagation coefficient matrices $\mathbf{W}^l,\forall l \in \mathcal{L}$, and noise variance $\sigma_n^2$.
\STATE \textbf{Initialize:} $\mathbf{v}_k^{\left(0\right)},\forall k \in \mathcal{K},{\mathbf{\Theta}^l}^{\left(0\right)},\forall l \in \mathcal{L}$ and the auxiliary variables $\varpi_k^{\left(0\right)},\varsigma_k^{\left(0\right)},\forall k \in \mathcal{K}$.
\STATE Set $m \leftarrow 1$.
\REPEAT
    \STATE \textbf{Beamfocusing optimization:}
	\STATE \hspace{0.4cm} Solve problem $\mathcal{P}3$ to design $\mathbf{v}_k^{\left(m\right)},\forall k \in \mathcal{K}$ using $\mathbf{v}_k^{\left(m-1\right)}, {\mathbf{\Theta}^l}^{\left(m-1\right)},\forall l \in \mathcal{L}$.
    \STATE \textbf{Phase-shift optimization:}
	\STATE \hspace{0.4cm} Solve problem $\mathcal{P}5$ to design ${\mathbf{\Theta}^l}^{\left(m\right)},\forall l \in \mathcal{L}$ using the designed $\mathbf{v}_k^{\left(m\right)}$ and ${\mathbf{\Theta}^l}^{\left(m-1\right)},\forall l \in \mathcal{L}$.
    \STATE Update all the optimization and auxiliary variables.
	\STATE $m \leftarrow m + 1$.
\UNTIL Convergence.
\STATE \textbf{Output} Optimal beamformers $\mathbf{v}_k^{\ast},\forall k\in\mathcal{K}$ and phase-shift matrices ${\boldsymbol{\Theta}^l}^{\ast},\forall l\in\mathcal{L}$.
\end{algorithmic}
\end{algorithm}

\begin{algorithm}[t]
\caption{AO algorithm with PGA-based phase shift design}
\label{alg2}
\begin{algorithmic}[1]

\STATE \textbf{Input:} Channel vectors $\mathbf{h}_k,\forall k \in \mathcal{K}$, $\mathbf{h}_u,\forall u \in \mathcal{U}$, SIM propagation coefficient matrices $\mathbf{W}^l,\forall l \in \mathcal{L}$, noise variance $\sigma_n^2$, and penalty factors $\mu_1, \mu_2$.
\STATE \textbf{Initialize:} $\mathbf{v}_k^{\left(0\right)},\forall k \in \mathcal{K}$ and ${\mathbf{\Theta}^l}^{\left(0\right)},\forall l \in \mathcal{L}$.
\STATE Set $m \leftarrow 1$.

\REPEAT
    \STATE Calculate the partial derivatives of the objective function $F$ with respect to ${\theta_n^l}^{\left(m\right)},~\forall~n\in\mathcal{N},~l\in\mathcal{L}$ using~\eqref{Eq:37}.
    \STATE Normalize the partial derivatives using~\eqref{Eq:25-1}.
    \STATE \textbf{Initialize:} Armijo step size $\alpha^{0}=\alpha_0>0$, the factor $\varepsilon \in \left(0,1\right)$, and $t \leftarrow 1$.
    \REPEAT
        \STATE Update ${\mathbf{\Theta}^l}^{\left(t\right)},\forall l \in \mathcal{L}$ using~\eqref{Eq:26}.
        \STATE Solve problem $\mathcal{P}3$ to obtain $\mathbf{v}_k^{(t)},\forall k\in\mathcal{K}$ using ${\mathbf{\Theta}^l}^{(t)},\forall l\in\mathcal{L}$.
        \IF{$F^{(t)}<F^{(t-1)}$}
        \STATE $\alpha^{(t)} \leftarrow \varepsilon \alpha^{(t-1)}$
        \ENDIF
        \STATE $t \leftarrow t+1$
    \UNTIL $F^{(t)} \ge F^{(t-1)}$.
    \STATE Update optimization variables.
    \STATE $m \leftarrow m + 1$.
\UNTIL Convergence
\STATE \textbf{Output:} Optimal beamformers $\mathbf{v}_k^{\ast},\forall k\in\mathcal{K}$ and phase-shift matrices ${\boldsymbol{\Theta}^l}^{\ast},\forall l\in\mathcal{L}$.
\end{algorithmic}
\end{algorithm}

As shown in Algorithm~\ref{alg1}, the complexity of the overall algorithm is dominated by the complexity of solving problems $\mathcal{P}3$ and $\mathcal{P}5$. The transmit beamfocusing problem $\mathcal{P}3$ and the SCA-based phase shift problem $\mathcal{P}5$, are formulated as second-order cone programming (SOCP) problems, which are solved through the CVX's interior point method solvers. The complexities of these problems are $\mathcal{O}\left(\sqrt{K+U}\left( KM\right)^{3}\right)$ and $\mathcal{O}\left( \sqrt{K+U+N}~L N^{3}\right)$, respectively. Consequently, the overall computational complexity is in the order of $\mathcal{O}\left(Q\left(\sqrt{K+U}\left(KM\right)^{3} + \sqrt{K+U+N}~LN^3\right)\right)$, where $Q$ is the number of iterations in Algorithm~\ref{alg1}. Regarding Algorithm~\ref{alg2}, the complexity of the PGA-based phase shift design method is $\mathcal{O}\left(K L N\right)$. Hence, the overall computational complexity of algorithm~\ref{alg2} is $\mathcal{O}\left(Q\left(T\sqrt{K+U}\left(KM\right)^{3} + KLN\right)\right)$, where $T$ is the number of iterations in the inner Armijo step size loop.
\section{Simulation Results}
\label{results}

\begin{table*}[t]
	\renewcommand{\arraystretch}{1.5}
        \centering
        \caption{Simulation Parameters}
	\begin{tabularx}{\textwidth}{Xc||Xc}
		\hline\hline
		\textbf{Parameter} & \textbf{Value} & \textbf{Parameter} & \textbf{Value} \\
		\hline
        Number of Tx antennas, $M$ & $6$ & Number of covert UEs, $K$  & $3$\\
		Noise power spectral density & $-174$ dBm/Hz & Number of wardens, $U$ & $3$\\
		Number of SIM layers, $L$ & $5$ & Number of meta atoms, $N$ & $45$ \\
		Carrier frequency & $10$ GHz & Communication bandwidth, $B$ & $1$ MHz \\
		BS height, $Z$  & $5$ m& Radius of service area & $10$ m\\
		Maximum allowed transmit power $P_{\text{max}}$ & $40$ dBm & Covertness threshold $\epsilon$  & $0.1$\\
		Users' rate requirement, $R_k^{\text{min}}$ & $0.1$ bps/Hz &  Number of Warden's observations per time slot, $J$ & $10$\\
        Meta atoms spacing, $dx$, $dz$ &$\lambda$ m & Area of each meta atom & $\lambda^2/4$ m$^2$\\
		\hline
	\end{tabularx}
	\label{tab1}
\end{table*}

In this section, we conduct extensive numerical simulations to evaluate the performance of the proposed SIM-assisted near-field covert communication system. The BS is located at the point $\left(0,0,Z\right)$ of a three-dimensional Cartesian coordinate system, where $Z$ is the BS height, and all communication UEs and wardens are randomly located in the $x-y$ plane according to a uniform distribution within a circle of radius $10$ m centered at $\left(10,10,0\right)$ m to comply with the near-field Rayleigh distance conditions. The thickness of the SIM is set to $5\lambda$. Hence, the spacing between any two layers and between the BS and layer 1 is $d_{\text{SIM}}=5\lambda/L$. The path loss exponent is set to $\eta=2.5$. Unless otherwise stated, the simulation parameters are set as shown in Table~\ref{tab1}. The propagation distance, $r_{n,\tilde{n}}^l$, from the $\tilde{n}$-th meta atom in the $l-1$-th layer to the $n$-th meta atom in the $l$-th layer is given by:

\small
\begin{equation}
    r_{n,\tilde{n}}^l = \sqrt{d_{\text{SIM}}^2+ \lambda^2 \left( \left \lfloor \frac{\left| n-\tilde{n}\right|}{N_{\text{max}}} \right \rfloor^2+\left[\text{mod}\left(\left| n-\tilde{n}\right|,N_{\text{max}}\right)\right]^2\right)},
\end{equation}
\normalsize
where $N_{\text{max}}=\max\left\{N_x,N_z\right\}$. Additionally, the propagation distance, $r_{n,m}^1$, from the $m$-th transmit antenna to the $n$-th meta atom in the $1^{st}$ layer is given by~\eqref{Eq:45} in the top of the next page. All simulation results are averaged over $100$ independent channel realizations drawn from the adopted near-field channel model.
\begin{figure*}[!t]
\begin{equation}
     r_{n,m}^1 = \sqrt{d_{\text{SIM}}^2+ \lambda^2 \left[ \left(\text{mod}\left( n-1,N_{\text{max}}\right)-\frac{1+N_{\text{max}}}{2}\right)-\left (m- \frac{1+M}{2} \right )\right]^2 + \lambda^2 \left(\left \lceil \frac{n}{N_{\text{max}}} \right \rceil - \frac{1+N_{\text{max}}}{2} \right)^2} \label{Eq:45}
\end{equation}
\vspace*{2pt}
\hrulefill
\end{figure*}

For comparison purposes, we are going to compare our proposed SCA-based and PGA-based algorithms with two benchmarks; 1) \textbf{Random phase shift}: in which the beamfocusing vectors are designed by solving $\mathcal{P}3$, while the phase shift matrices of all layers are chosen randomly to comply with the unit modulus constraint for the phase shift of each meta atom. 2) \textbf{Codebook-based phase shift}: in which we generate a codebook for the phase shift vectors that comprises $100$ phase shift matrices each of size $\left(L \times N\right)$. For each matrix, the beamfocusing vectors are designed by solving $\mathcal{P}3$, and the phase shift matrix that gives the maximum achieved sum covert rate is selected.

The convergence performance of the proposed SCA-based phase shift and PGA-based phase shift algorithms is presented in Fig.~\ref{res1} at different number of SIM layers assuming $M=4$ transmit antennas. As we can see in the figure, both algorithms converge to a stationary point in a relatively small number of iterations, most of the time less than $20$ iterations. In addition, the sum covert rate achieved by adopting the proposed SCA-based phase shift algorithm is higher than that achieved by adopting the PGA-based phase shift algorithm for all values of $L$. For example, the SCA-based phase shift algorithm outperforms the PGA-based phase shift algorithm with almost $225$\% at $L=5$ layers. Moreover, for both algorithms, the achieved sum covert rate increases with increasing the number of layers, where the achieved rate by the SCA-based phase shift algorithm at $L=5$ layers is higher with almost $17.2$\% that that achieved by the same algorithm adopting only single layer. This can be attributed to the fact that increasing the number of SIM layers provides more opportunity to configure the wireless channel to reduce the inter-user interference. However, the further increase of the number of SIM layers may cause degradation of the achieved covert rate as a result of the multiplication of the propagation coefficient matrices in the effective channel from the BS to the UEs.

\begin{figure}[!t]
\centerline{\includegraphics[width=0.5\textwidth]{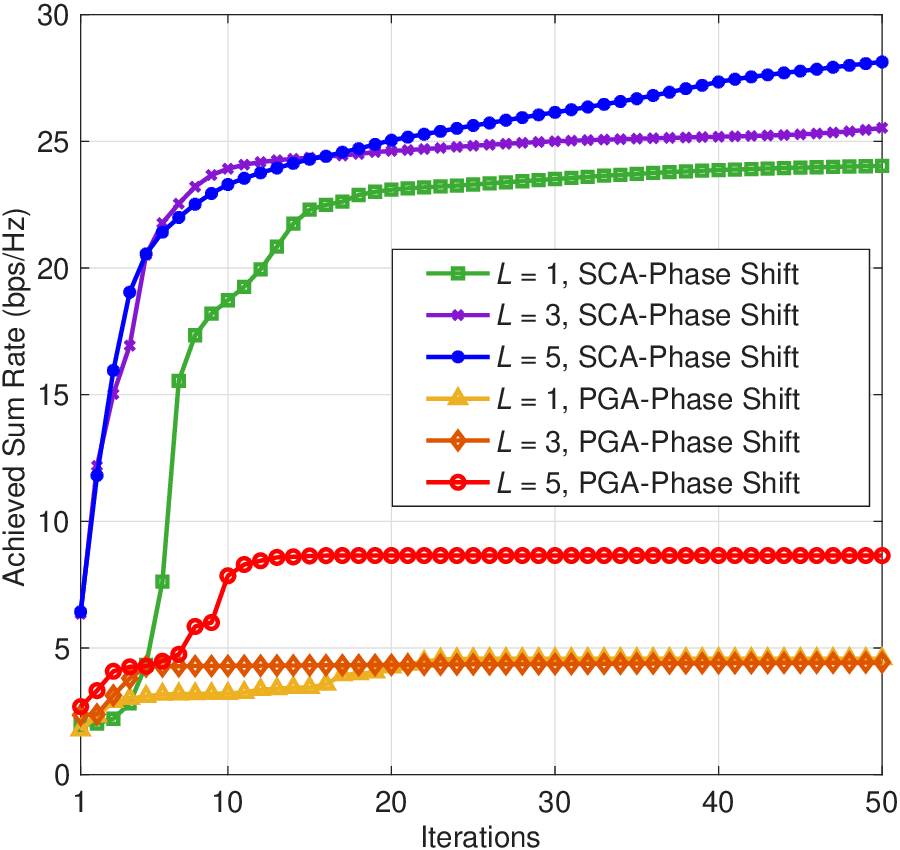}}
\caption{Convergence performance of the proposed system for different number of Layers.}
\label{res1}
\end{figure}
\begin{figure}[!t]
\centerline{\includegraphics[width=0.5\textwidth]{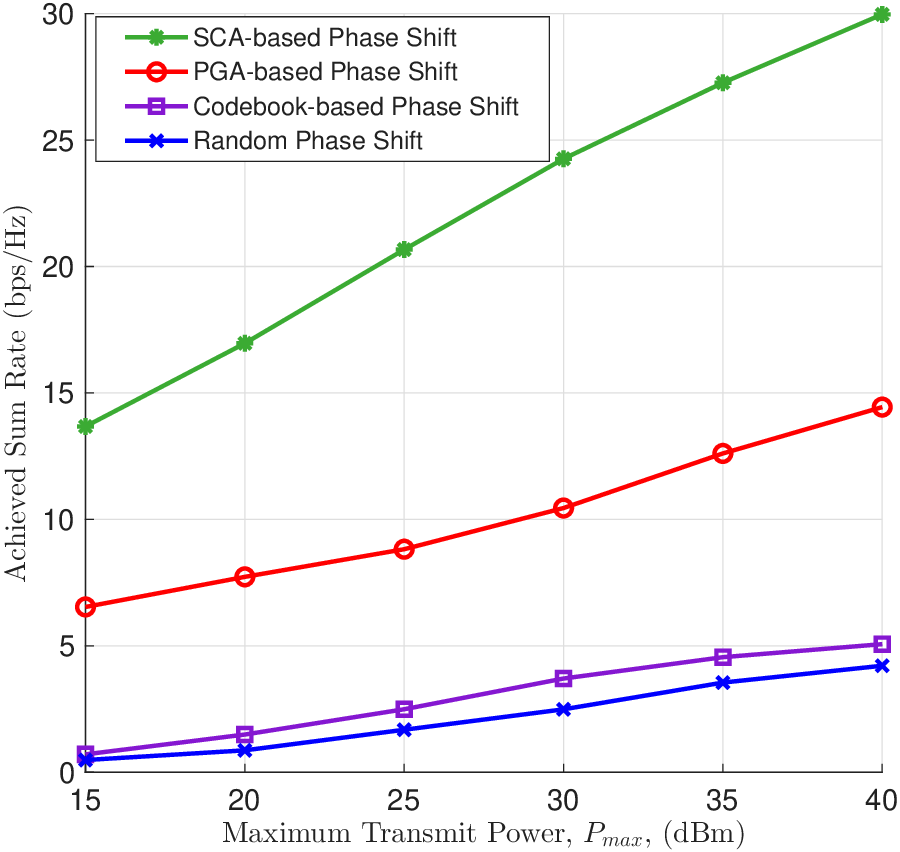}}
\caption{Sum rate performance of the proposed system against the maximum transmit power as compared with benchmarks.}
\label{res2}
\end{figure}

In Fig.~\ref{res2}, the sum covert rate performance of the proposed system is studied as a function of the maximum allowed transmit power assuming $M=4$ transmit antennas. The achieved covert rate is compared with the PGA-based phase shift algorithm as well as with the benchmarks mentioned above. As shown in the figure, as the maximum transmit power increases, the achieved sum covert rate increases for all phase shift design schemes. In addition, the proposed SCA-based algorithm achieves the highest sum covert rate for all values of maximum transmit power. For example, there is an average increase in the sum rate of almost 120\% in favor of the SCA-based method compared to the PGA-method. Moreover, the achieved sum rate by the PGA-based method is higher than that achieved by the codebook-based phase shift or the random phase shift by more than 300\%. We can also observe that the codebook-based phase shift and the random phase shift schemes achieve very close performance over the full transmit power range. This is because both approaches do not perform SIM phase shift optimization; instead, the random phase shift scheme applies completely random phase shift matrices for all layers while the codebook-based phase shift selects non-optimized phase configurations from a random pool. Consequently, both schemes do not effectively exploit the beamfocusing capability of the SIM, leading to comparable and relatively limited sum rate performance.

The effect of increasing the number of transmit antennas on the achieved sum rate of the proposed system is presented in Fig.~\ref{res3}. The figure shows that as the number of transmit antennas increases, the achieved sum rate increases for all the schemes. Additionally, the proposed SCA-based phase shift algorithm outperforms other algorithms over the entire range of transmit antennas, where the achieved sum rate with the SCA-based phase shift design scheme is almost $100$\% higher than that achieved with the PGA-based phase shift algorithm and almost $173$\% higher than that achieved with the codebook-based and random phase shift schemes at $M=8$ antennas. Moreover, we can notice that the PGA-based phase shift algorithm performs better with a higher number of transmit antennas as compared to the codebook-based phase shift and random phase shift algorithms. More specifically, the PGA-based method seems to provide poor performance at $M<6$. Consequently, we will set $M=6$ transmit antennas in all subsequent results. 
\begin{figure}[!t]
\centerline{\includegraphics[width=0.5\textwidth]{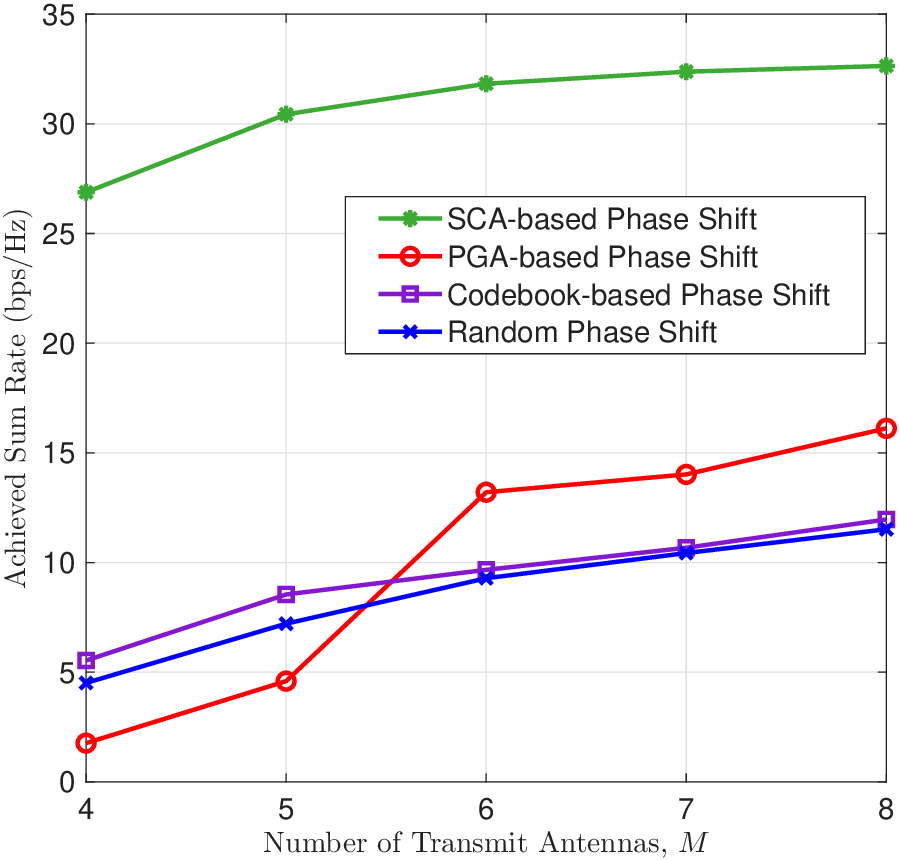}}
\caption{Sum rate performance of the proposed system against the number of transmit antennas as compared with benchmarks.}
\label{res3}
\end{figure}

\begin{figure}[!t]
\centerline{\includegraphics[width=0.5\textwidth]{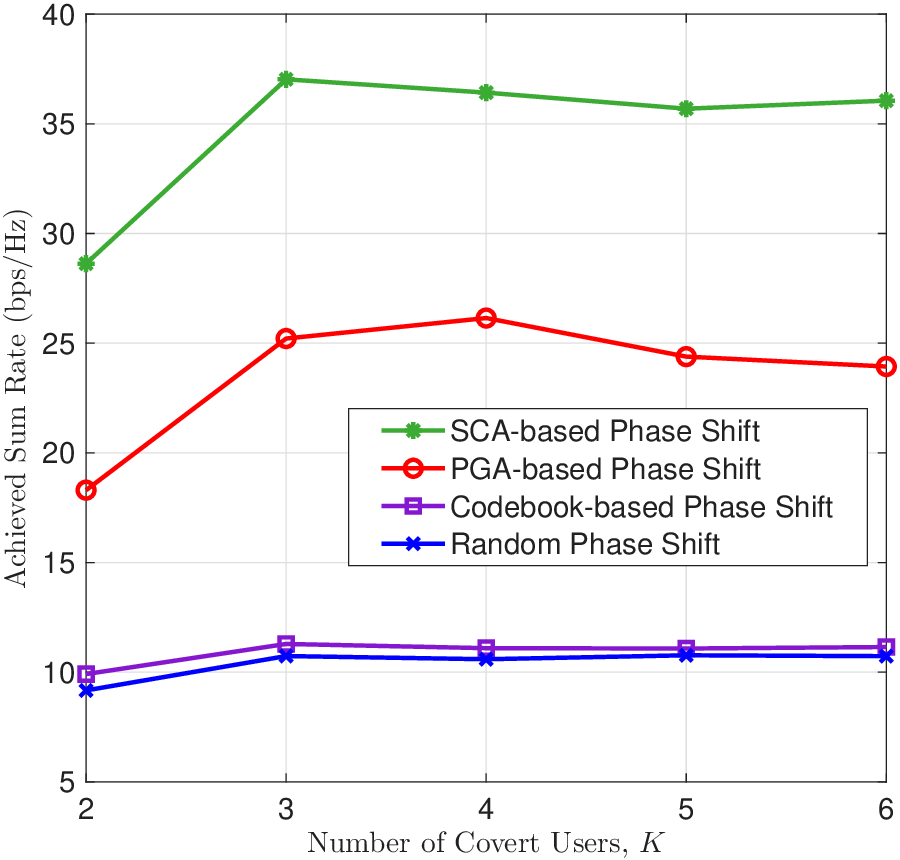}}
\caption{Sum rate performance of the proposed system against the number of covert users as compared with benchmarks.}
\label{res4}
\end{figure}

In Fig.~\ref{res4}, we evaluate the effect of increasing the number of covert UEs on the achieved sum rate. Since the number of UEs must be equal to or less than the number of transmitting antennas, we set the number of transmit antennas to $M=6$, and we change the number of UEs from $2$ to $6$. As we can see from the figure, the sum rate starts to increase with increasing the number of UEs up to a certain level, after which the sum rate starts to degrades with further increasing the number of UEs. This is because of the substantial increase in the inter-user interference as the number of UEs increases, where the beamfocusing design and the SIM phase shift design cannot effectively mitigate this interference. We can conclude that approximately for $K>4$, the system sum rate starts to gradually decrease for all phase shift design schemes. Moreover, the SCA-based phase shift design algorithm provides the highest performance among other algorithms, where it outperforms the sum rate achieved by PGA-based phase shift and codebook-based phase shift schemes with almost $47$\% and $228$\%, respectively.

The effect of increasing the number of wardens $U$ on the achieved sum rate for different phase shift design strategies is presented in Fig.~\ref{res5}. In general, the sum rate decreases as $U$ increases, revealing that adding more wardens introduces stricter constraints on the system that make it harder to optimize. The SCA-based phase shift consistently outperforms the other schemes, achieving the highest sum rate across all values of $U$, indicating its stronger ability to handle such strict covertness constraints. In comparison, PGA-based and codebook-based methods result in lower sum rates, while the random phase shift performs the worst. The performance gap becomes more evident as $U$ grows, showing that simpler or non-optimized designs struggle to cope with the increasing number of wardens. These results highlight the effectiveness and robustness of the SCA-based approach in dealing with systems with multiple wardens.
\begin{figure}[!t]
\centerline{\includegraphics[width=0.5\textwidth]{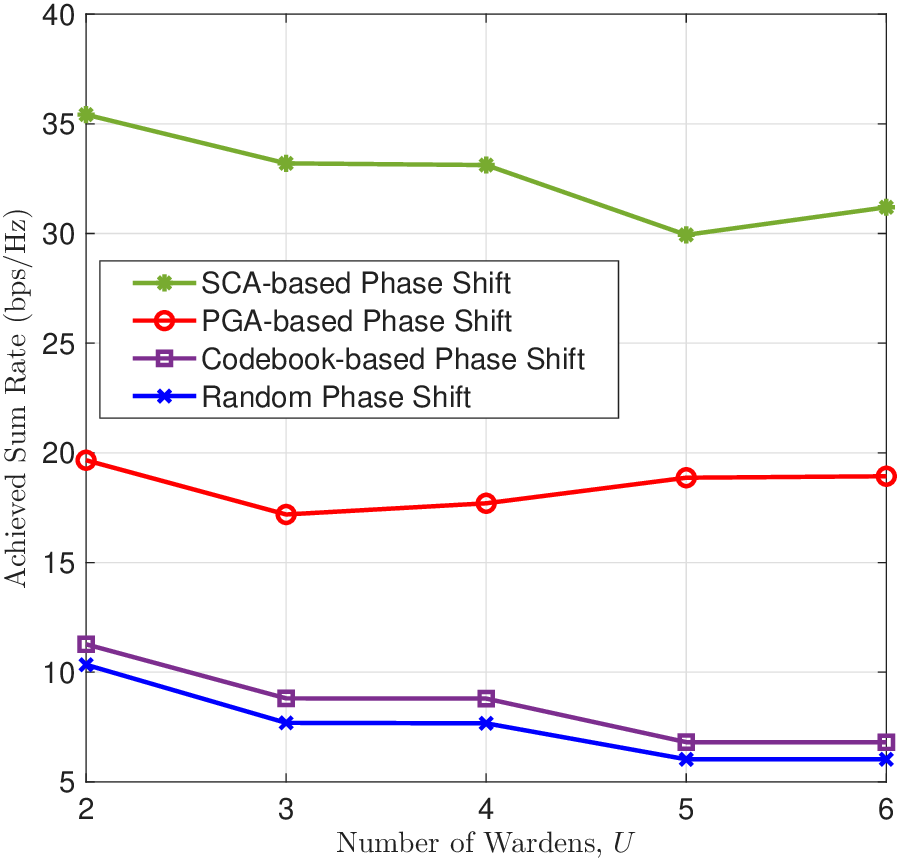}}
\caption{Sum rate performance of the proposed system against the number of wardens as compared with benchmarks.}
\label{res5}
\end{figure}

The effect of increasing the number of meta atoms on the achieved sum covert rate is studied in Fig.~\ref{res6}. As shown in the figure, the sum rate tends to increase as the number of meta atoms in each SIM layer increases for both the proposed algorithms as well as the benchmarks. This improvement can be attributed to the enhanced beamfocusing gain and higher array gain provided by a larger surface, which improves the received SINR at the legitimate users while maintaining the covert constraint. Moreover, the additional spatial degrees of freedom offered by more meta atoms enable more accurate phase shift design, which leads to more effective signal focusing and interference mitigation. However, the performance gain gradually diminishes and eventually saturates beyond a certain number of meta atoms. This behavior is because of the fact that, after a given surface size, the system becomes limited by other factors, such as transmit power. Consequently, increasing the number of meta atoms further provides a marginal beamforming improvement and yields only limited additional rate enhancement. Additionally, both the codebook-based and the random phase shift based algorithms offer almost the same performance, while the proposed SCA-based phase shift design algorithm outperforms other algorithms over the whole range of the number of meta atoms. For example, the SCA-based phase shift algorithm achieves an almost $103$\% higher sum rate compared to the PGA-based algorithm and an almost $270$\% higher sum rate compared to the codebook-based and random phase shift based algorithms at $45$ meta atoms. These percentages are further increased with increasing the number of meta atoms.
\begin{figure}[!t]
\centerline{\includegraphics[width=0.5\textwidth]{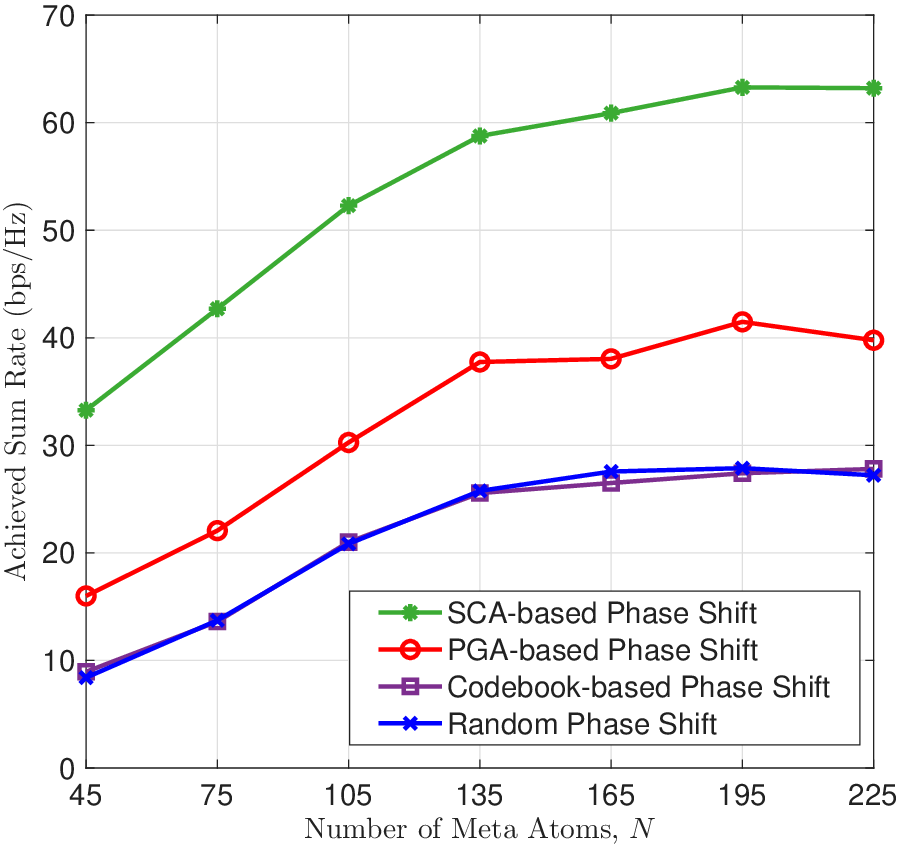}}
\caption{Sum rate performance of the proposed system against the number of meta atoms as compared with benchmarks.}
\label{res6}
\end{figure}

In Fig~\ref{res7}, the impact of increasing the covertness threshold (tolerance) $\epsilon$ and the number of wardens' observations $J$ on the performance of the proposed system is studied. It is clear from the figure that the achieved sum covert rate increases as the covertness threshold $\epsilon$ increases. This is because as $\epsilon$ increases, the covertness requirement of the system becomes easier, and therefore, the system tends to direct the resources towards increasing the SINR of the legitimate UEs. On the other hand, we can notice that there is a slight decrease or almost no decrease in the achieved sum covert rate as the number of wardens' observation increases, which can be attributed to the fact that increasing the number of observation increases the warden's capability of detection, and therefore, the system directs the resources to ensure the fulfillment of the covertness constraint, which reduces the achieved sum covert rate. However, the performance tends to saturate with further increasing the number of observations.
\begin{figure}[!t]
\centerline{\includegraphics[width=0.5\textwidth]{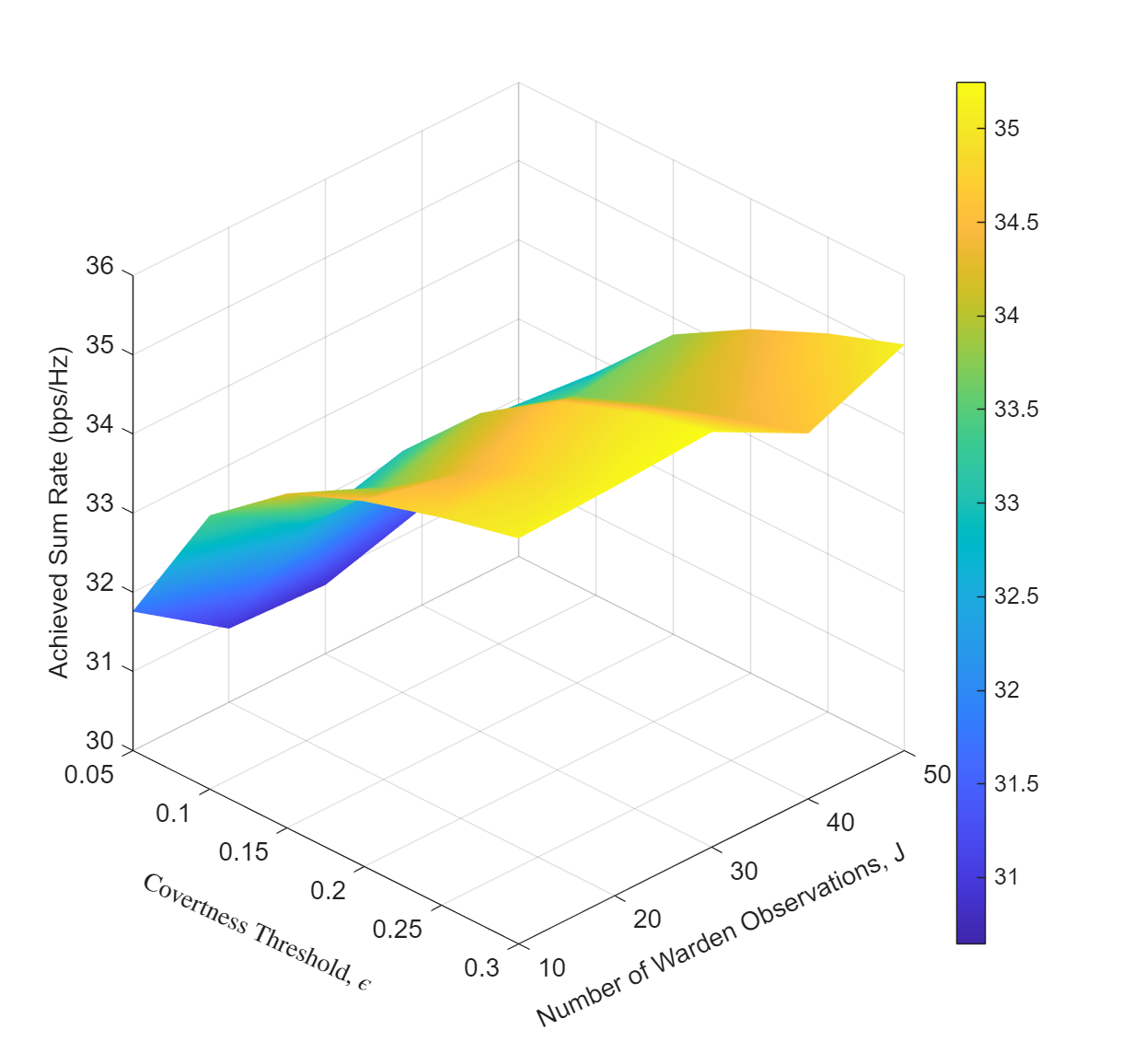}}
\caption{Sum rate performance of the proposed system as a function of the covertness threshold $\epsilon$ and the number of wardens' observations $J$.}
\label{res7}
\end{figure}
\section{Conclusion}
\label{sec_conc}
In this paper, we proposed a SIM-assisted multi-user MISO downlink covert communication system adopting near-field channel model. A sum covert rate maximization problem was formulated, where the digital beamfocusing vectors and the SIM phase shift matrices were optimized at all layers while fulfilling constraints of maximum power budget, minimum rate requirement, unit modulus phase shifts, and minimum detection probability at multiple wardens. Due to the highly non-convex nature of the formulated problem and the coupling between variables, AO-based algorithms were proposed, where the beamfocusing vectors and the SIM phase shift sub-problems are optimized alternately until convergence. We adopted SCA-based phase shift design as well as PGA-based phase shift design algorithms and compared the results with other benchmarks. The conducted simulations reveal that the SCA-based phase shift design method outperforms all other schemes with a considerable performance gap over different ranges of various system parameters, which indicates the robustness and effectiveness of the proposed method.

In future work, the proposed framework can be extended to scenarios investigating wideband and multi-cell near-field covert communication systems, which constitutes an interesting research direction. In addition, extending the proposed framework to secure integrated sensing and communications (ISAC) systems with joint sensing and covert communication is also of significant interest. Finally, developing low-complexity learning-based optimization frameworks for large-scale SIM architectures is another promising direction for real-time implementation.

{\appendix[Proof of Theorem~\ref{thm1}]
\label{App1}
Using the quotient rule, the partial derivatives of the SINR with respect to $\theta_n^l$ can be expressed as:
\begin{equation}
    \frac{\partial \gamma_k}{\partial \theta_n^l} = \frac{\partial}{\partial \theta_n^l} \frac{\text{num}_k}{\text{den}_k}= \frac{\text{den}_k \frac{\partial \text{num}_k}{\partial \theta_n^l}-\text{num}_k\frac{\partial \text{den}_k}{\partial \theta_n^l}}{\text{den}_k^2}, \label{Eq:App1_1}
\end{equation}
where $\text{num}_k =\left| \mathbf{h}_k^H \mathbf{G} \mathbf{v}_k\right|^2$ and $\text{den}_k =\sum_{i = 1, i\neq k}^K \left| \mathbf{h}_k^H \mathbf{G} \mathbf{v}_i\right|^2 + \sigma_n^2$. To find these derivatives for all $n\in\mathcal{N}$ and $l\in\mathcal{L}$, and according to~\eqref{Eq:1}, $\frac{\partial \mathbf{\Theta}^{l}}{\partial \theta_n^l} = \text{diag}\left\{0,\dots,je^{j\theta_n^l},\dots,0\right\} = je^{j\theta_n^l}\mathbf{E}_{n,n}$, where $\mathbf{E}_{n,n}$ is an $N\times N$ matrix whose entries are all zero except for the $(n,n)$-th element, which is equal to 1. Now, we get:
\begin{equation}
    \frac{\partial \mathbf{G}}{\partial \theta_n^l} = je^{j\theta_n^l} \mathbf{G}_L^l \mathbf{E}_{n,n} \mathbf{G}_R^l. \label{Eq:App1_2}
\end{equation}

Consequently, we can write:
\begin{dmath}
    \frac{\partial \text{num}_k}{\partial \theta_n^l} = 2\mathcal{R}\left[je^{j\theta_n^l} \left(\mathbf{h}_k^H\mathbf{G}\mathbf{v}_k\right)^*\mathbf{h}_k^H \mathbf{G}_L^l \mathbf{E}_{n,n} \mathbf{G}_R^l \mathbf{v}_k\right] = -2\mathcal{I}\left[e^{j\theta_n^l} \left(\mathbf{h}_k^H\mathbf{G}\mathbf{v}_k\right)^*\mathbf{h}_k^H \mathbf{G}_L^l \mathbf{E}_{n,n} \mathbf{G}_R^l \mathbf{v}_k\right]=-2\psi_{k,k}, \label{Eq:App1_3}
\end{dmath}
and
\begin{dmath}
    \frac{\partial \text{den}_k}{\partial \theta_n^l} = 2\sum_{i=1,i\neq k}^K\mathcal{R}\left[je^{j\theta_n^l} \left(\mathbf{h}_k^H\mathbf{G}\mathbf{v}_i\right)^*\mathbf{h}_k^H \mathbf{G}_L^l \mathbf{E}_{n,n} \mathbf{G}_R^l \mathbf{v}_i\right] = -2\sum_{i=1,i\neq k}^K \mathcal{I}\left[e^{j\theta_n^l} \left(\mathbf{h}_k^H\mathbf{G}\mathbf{v}_i\right)^*\mathbf{h}_k^H \mathbf{G}_L^l \mathbf{E}_{n,n} \mathbf{G}_R^l \mathbf{v}_i\right]=-2\sum_{i=1,i\neq k}^K\psi_{k,i}. \label{Eq:App1_4}
\end{dmath}

Substituting with~\eqref{Eq:App1_3} and~\eqref{Eq:App1_4} in~\eqref{Eq:App1_1}, we get~\eqref{Eq:28}, which completes the proof.
}


\balance

\bibliographystyle{IEEEtran}
\bibliography{IEEEabrv,References}

\newpage

 




\end{document}